\newcommand\version{September 23, 2018}
\newtheorem{theorem}{Theorem}
\newtheorem{proposition}[theorem]{Proposition}
\newtheorem{lemma}[theorem]{Lemma}
\newtheorem{corollary}[theorem]{Corollary}
\theoremstyle{definition}
\newtheorem{assumption}[theorem]{Assumption}
\theoremstyle{remark}
\newtheorem{remark}[theorem]{Remark}
\newcommand{\ch}{\mathord{\mathfrak h}}
\renewcommand{\epsilon}{\varepsilon}
\newcommand{\ii}{{\rm i}}
\renewcommand{\phi}{\varphi}
\newcommand{\R}{\mathbb{R}}
\newcommand{\Z}{\mathbb{Z}}
\newcommand\1{{\ensuremath {\mathds 1} }}
\DeclareMathOperator{\ran}{ran}
\DeclareMathOperator{\re}{Re}
\DeclareMathOperator{\spec}{spec}
\begin{document}

\title[Critical temperature --- \version]{The BCS critical temperature in a weak external electric field via a linear two-body operator}

\author[R. L. Frank]{Rupert L. Frank}
\address[R. L. Frank]{Mathematisches Institut der Universit\"at M\"unchen, Theresienstr. 39, 80333 M\"unchen, Germany, and Mathematics 253-37, Caltech, Pasadena, CA 91125, USA}
\email{rlfrank@caltech.edu}

\author[C. Hainzl]{Christian Hainzl}
\address[C. Hainzl]{Mathematisches Institut, Universit\"at T\"ubingen, Auf der Morgenstelle 10, 72076 T\"ubingen, Germany}
\email{christian.hainzl@uni-tuebingen.de}

\begin{abstract}
We study the critical temperature of a superconductive material in a weak external electric potential via a linear approximation of the BCS functional. We reproduce a similar result as in \cite{FHSS2} using the strategy introduced in \cite{FHaiLa}, where we considered the case of an external constant magnetic field. 
\end{abstract}

\dedicatory{Dedicated to Herbert Spohn on the occasion of his seventieth birthday}

\maketitle

\renewcommand{\thefootnote}{${}$} \footnotetext{\copyright\, 2018 by
  the authors. This paper may be reproduced, in its entirety, for
  non-commercial purposes.}

\section{Introduction and main result}

\subsection{Objective and background}

In this paper we want to consider a linear two-body operator which determines the critical temperature of a superconductive or superfluid system. 
This linear operator was studied recently in connection with the influence of a constant magnetic field on the critical temperature \cite{FHaiLa}. 
The analysis of this operator was significantly complicated by the unboundedness of the magnetic vector potential as well as the non-commutativity of 
the components of the magnetic momentum. For this reason we want to present here the method of \cite{FHaiLa} in the simplified situation where the external field consists of an electric potential. 

We have the following situation in mind. Two particles interact via a two body potential $-2V(x-y)$ and both particles are placed in an external electric potential $h^2 W(hx)$, where $h>0$ is a small parameter. Thus, the external field is weak of order $h^2$ and varies on the scale of order $1/h$, whereas both the strength and the scale of the interaction are of order one determined by $V$. The energy is given by the linearized BCS (Bardeen--Cooper--Schrieffer) functional at positive temperature $T = 1/\beta$. 

Therefore we are interested in the infimum of the spectrum of the two-body operator
\begin{equation}
\label{eq:twobodyk}
\frac{p_x^2 +h^2 W(hx)  +p_y^2 +h^2 W(hy) - 2\mu}{\tanh \left( \frac\beta 2 \left( p_x^2 +h^2 W(hx) - \mu \right)\right) + \tanh \left( \frac\beta 2 \left(p_y^2 +h^2 W(hy) - \mu \right) \right)} - V(x-y)
\end{equation}
acting in
$$
L_{\rm symm}^2(\R^3\times\R^3) = \left\{\alpha\in L^2(\R^3\times\R^3):\ \alpha(x,y)=\alpha(y,x) \ \text{for all}\ x,y\in\R^3 \right\}.
$$
Here $p_x=-\ii\nabla_x$ and $p_y=-\ii\nabla_y$. The interaction potential $-2V(x-y)$ between the two particles is assumed to be spherically symmetric, i.e., to depend only on the distance $|x-y|$. (We will also assume that the interaction potential is non-positive and the minus sign, as opposed to the more usual plus sign, will simplify some formulas.) Moreover, $\mu\in\R$ is the chemical potential. We are interested in the dependence of the operator on two parameters, namely, the inverse temperature $\beta>0$ and the scale ratio $h>0$. More precisely, we are interested in identifying regimes of temperatures $T=\beta^{-1}$ such that the infimum of the spectrum of the above operator is positive or negative for all sufficiently small $h>0$. 

As we explained in detail in \cite{FHaiLa} and will repeat below, the motivation for this question comes from the BCS theory of superconductivity and the operator \eqref{eq:twobodyk} arises through the linearization of the Bogolubov--de Gennes equation around the normal state. Therefore, the question whether the infimum of the spectrum of the operator \eqref{eq:twobodyk} is positive or negative corresponds to the local stability or instability of the normal state. In that sense it is not hard to imagine that the BCS critical temperature corresponds to the value of $T$ for which the infimum of the spectrum of this operator is exactly zero. 

To describe our main result we introduce the effective one-body operator
\begin{equation}
\label{eq:onebody}
\frac{(-\ii\nabla_r)^2 - \mu}{\tanh \left( \frac\beta 2 \left( (-\ii\nabla_r)^2 - \mu \right)\right)} - V(r)
\end{equation}
acting in
$$
L^2_{\rm symm}(\R^3) = \{ \alpha \in L^2(\R^3):\ \alpha(-r)=\alpha(r) \ \text{for all}\ r\in\R^3\} \,.
$$
Later on, we will see that the variable $r\in\R^3$ arises as the relative coordinate $r=x-y$ of the two particles at $x$ and $y$. We will \emph{assume} that the operator $|(-\ii\nabla_r)^2-\mu| -V(r)$ has a negative eigenvalue. Then it is easy to see (see, e.g., \cite{HHSS}) that there is a unique $\beta_c\in (0,+\infty)$ such that the operator \eqref{eq:onebody} is non-negative for $\beta\leq\beta_c$ and has a negative eigenvalue for $\beta>\beta_c$. Let $T_c = \beta_c^{-1}$. Then our main result is, roughly speaking, that the infimum of the spectrum of the two-particle operator \eqref{eq:twobodyk} is negative for $T \leq T_c+c_0 h^2 + o(h^2)$ and positive for $T\geq T_c +c_0 h^2 - o(h^2)$. Here $c_0$ is a positive constant which we compute explicitly in terms of the zero-energy ground state of \eqref{eq:onebody} at $\beta=\beta_c$. (In fact, $c_0 = -T_c D_c$ with $D_c$ from \eqref{eq:dc}.) Thus, the external electric field $h^2 W(hx)$ changes the critical temperature by an amount $c_0 h^2 + o(h^2)$. Informally (that is, ignoring issues like the possible non-uniqueness of a critical temperature), this says that
$$
T_c(h)= T_c + c_0 h^2 + o(h^2) \,.
$$

The mathematical challenge of this problem is that low energy states of the two-particle operator \eqref{eq:twobodyk} exhibit a two-scale structure. As function of the relative coordinate $r=x-y$ and the center of mass coordinate $X=(x+y)/2$ they vary on a scale of order one with respect to $r$ and on a (much larger) scale of order $1/h$ with respect to $X$. The variation on the former scale is responsible for the leading order term $T_c$ for the critical temperature, whereas the variation on the latter scale is responsible for the subleading correction $c_0 h^2$. This subleading correction is determined by an effective linear Ginzburg--Landau functional which emerges on the macroscopic scale $1/h$ determined by the external potential. We hereby recover a similar result for the critical temperature as in the full non-linear BCS theory in \cite{FHSS2}. This is of course not unexpected since we deal with the second derivative around the normal state of the BCS functional. 

The work \cite{FHSS2} relied on \cite{FHSS} where the Ginzburg--Landau functional was derived from the BCS functional close to the critical temperature by means of a rather intricate proof. In view of this, the goal of the present paper is twofold. First, we explain the strategy from \cite{FHaiLa} in a simpler setting, and second, we derive the linearized Ginzburg--Landau equation in a simpler way as in the full non-linear case \cite{FHSS}. One difference compared to the work \cite{FHSS,FHSS2} is the fact that we do not restrict ourselves to a finite box and therefore omit the periodicity assumptions. Further, we work in relative and center-of-mass coordinates which is natural in terms of the before mentioned two-scale structure.

As in \cite{FHaiLa} we will not work directly with the two-particle operator \eqref{eq:twobodyk}, but rather with its Birman--Schwinger version.

Before we describe the precise set-up of our analysis, we would like to stress that in this paper we work with the BCS functional and its linearization around the normal state. This should not be confused with what is often called the BCS Hamiltonian or the BCS model and which was investigated, for instance, by Haag, Thirring and Wehrl from the point of view of algebraic quantum field theory. The BCS Hamiltonian is a many-body Hamiltonian which corresponds to a regularization of a $\delta$ interaction. The BCS functional arises as an effective non-linear functional by restricting the BCS Hamiltonian to quasi-free states and dropping the direct and exchange terms. We do allow, however, for more general interaction potentials. It remains an open problem to understand from a mathematically rigorous point of view the relation between the BCS functional and many-body quantum mechanics.  Nevertheless, our analysis leads to quantitative estimates which agree with physics.


\subsection{Model and main result}

Our model has the following ingredients.

\begin{assumption}\label{ass2}
(1) External electric potential $h^2 W(hx)$ such that $W \in W^{1,\infty}(\R^3)$.\\
(2) Inverse temperature $\beta = T^{-1}>0$\\
(3) Chemical potential $\mu\in\R$\\
(4) Non-negative, spherically symmetric interaction potential $V$ such that $V\in L^\infty(\R^3)$ and $|r|V\in L^\infty(\R^3)$
\end{assumption}

We recall that the Sobolev space $W^{1,\infty}(\R^3)$ consists of all bounded, Lipschitz continuous functions with a finite global Lipschitz constant.

The non-negativity assumption on $V$ is for technical convenience.
To simplify notation and since the precise meaning is always clear from the context, we use the same symbol $V$ also for the corresponding multiplication operators on $L^2_{\rm symm}(\R^3)$ (i.e., $(V\alpha)(r) = V (r)\alpha(r)$) and on $L^2_{\rm symm}(\R^3\times\R^3)$ (i.e., $(V\alpha)(x, y) = V (x-y)\alpha(x, y)$).

The corresponding single-particle Hamiltonian, acting in $L^2(\R^3)$, is defined by
\begin{equation}
\label{eq:hw}
\ch_W = p^2 +h^2 W(hx) - \mu \,.
\end{equation}
with the notation $p=-\ii\nabla$. The locations of the two particles are represented by coordinates $x,y\in\R^3$. If we want to emphasize the variables on which the operators act, we write
$$
\ch_{W,x} = p_x^2 + h^2 W(hx) -\mu \,,
\qquad
\ch_{W,y} = p_y^2 + h^2 W(hy) -\mu \,.
$$
 As in \cite{FHaiLa} we introduce a function $\Xi_\beta:\R^2\to\R$ by
$$
\Xi_\beta(E,E') := \frac{\tanh\frac{\beta E}{2}+\tanh\frac{\beta E'}{2}}{E+E'}
$$
if $E+E'\neq 0$ and $\Xi_\beta(E,-E) = (\beta/2)/\cosh^2(\beta E/2)$. Since the operators $\ch_{W,x}$ and $\ch_{W,y}$ commute, we can define the operator
$$
L_{T,W} = \Xi_\beta(\ch_{W,x},\ch_{W,y}) \,.
$$
We will always consider this operator in the Hilbert space $L^2_{\rm symm}(\R^3\times\R^3)$. Note that, with this notation, the operator in \eqref{eq:twobodyk} can be written as $L_{T,W}^{-1}-V$.

Next, in order to formulate our assumption on the critical temperature, we introduce the function $\chi_\beta:\R\to\R$ by
$$
\chi_\beta(E) := \frac{\tanh\frac{\beta E}2}{E}
$$
and set $\chi_\infty(E):=|E|^{-1}$. We consider the compact operator
$$
V^{1/2} \chi_\beta(p_r^2-\mu) V^{1/2}
$$
in $L^2_{\rm symm}(\R^3)$, where
$$
p_r = -\ii\nabla_r
$$
denotes the momentum operator. (The operator $\chi_\beta(p_r^2-\mu)$ is denoted by $K_T^{-1}$ in \cite{HHSS} and several works thereafter.) 

\begin{assumption}\label{ass0}
$\sup\spec V^{1/2}\chi_\infty(p_r^2-\mu)V^{1/2}>1$.
\end{assumption}

Since $\beta\mapsto\chi_\beta(E)$ is strictly increasing for each fixed $E\in\R$, Assumption \ref{ass0} implies that there is a unique $\beta_c\in (0,\infty)$ such that
\begin{align*}
\sup \spec V^{1/2} \chi_\beta(p_r^2-\mu) V^{1/2} & \leq 1 \qquad\text{if}\ \beta\leq\beta_c \,,\\
\sup \spec V^{1/2} \chi_\beta(p_r^2-\mu) V^{1/2} & > 1 \qquad\text{if}\ \beta>\beta_c \,.
\end{align*}
We set $T_c=\beta_c^{-1}$. Note that the operator $V^{1/2} \chi_{\beta_c}(p_r^2-\mu) V^{1/2}$ has eigenvalue $1$.

\begin{assumption}\label{ass1}
The eigenvalue $1$ of the operator $V^{1/2} \chi_{\beta_c}(p_r^2-\mu) V^{1/2}$ is simple.
\end{assumption}

We denote by $\phi_*$ a normalized eigenfunction of $V^{1/2} \chi_\beta(p_r^2-\mu) V^{1/2}$ corresponding to the eigenvalue $1$ which, by assumption, is unique up to a phase. Since $p_r^2$ and $V$ are real operators, so is $V^{1/2} \chi_\beta(p_r^2-\mu) V^{1/2}$ and we can assume that $\phi_*$ is real-valued.

The spherical symmetry of $V$ from Assumption \ref{ass2} and the non-degeneracy from Assumption \ref{ass1} imply that $\phi_*$ is spherically symmetric.

From a physics point of view, Assumption \ref{ass1} restricts us to potentials giving rise to s-wave superconductivity. It is known that this assumption is fulfilled for a large class of potentials, including those which have a non-negative Fourier transform \cite{HaSe}. For partial results in the case where Assumption \ref{ass1} is violated, we refer to \cite{FrLe}.

As the final preliminary before stating our main result, we will introduce some constants. They are defined in terms of the auxiliary functions
\begin{align}\label{eq:auxiliary}
g_0(z) & = \frac{\tanh(z/2)}{z} \,,\notag \\
g_1(z) & = \frac{e^{2z}-2ze^z-1}{z^2(e^z+1)^2} = \frac1{2z^2} \frac{\sinh z-z}{\cosh^2(z/2)}\,, \notag \\
g_2(z) & = \frac{2e^z (e^z-1)}{z(e^z+1)^3} = \frac1{2z} \frac{\tanh(z/2)}{\cosh^2(z/2)}\,,
\end{align}
as well as the function
\begin{equation}
\label{eq:t}
t(p) := \|\chi_{\beta_c}((-\ii\nabla_r)^2-\mu) V^{1/2}\phi_*\|^{-1}\  2 (2\pi)^{-3/2} \int_{\R^3} dx\, V(x)^{1/2} \phi_*(x) e^{-\ii p\cdot x} \,.
\end{equation}
(The prefactor in front of the integral is irrelevant for us and only introduced for consistency with the definition in \cite{FHSS2}.) We now set
\begin{align}\label{eq:glcoeff}
\Lambda_0 & := \frac{\beta_c^2}{16} \int_{\R^3} \frac{dp}{(2\pi)^3}\, |t(p)|^2 \left(g_1(\beta_c(p^2-\mu)) + \frac23 \beta_c p^2 g_2(\beta_c(p^2-\mu)) \right) \,,\\
\Lambda_1 & := \frac{\beta_c^2}{4} \int_{\R^3} \frac{dp}{(2\pi)^3}|t(p)|^2\, g_1(\beta_c(p^2-\mu)) \, \,, \\
\Lambda_2 & := \frac{\beta_c}8 \int_{\R^3} \frac{dp}{(2\pi)^3}\, |t(p)|^2 \cosh^{-2}(\beta_c(p^2-\mu)/2) \,.
\end{align}
The constants $\Lambda_0$ and $\Lambda_2$ are positive (for a proof for $\Lambda_0$ see \cite{FHSS}). Note that the quotient $\Lambda_0/\Lambda_2$, which will appear in our main result, has the dimension of an inverse temperature.

We set 
\begin{equation}
\label{eq:dc}
D_c := \frac{\Lambda_0}{\Lambda_2}\, \inf\spec \left( p_X^2 + \frac {\Lambda_1}{\Lambda_0} W(X) \right),
\end{equation}
where the operator on the right side is considered as an operator in $L^2(\R^3)$ and where $p_X=-\ii\nabla_X$.

The following is our main theorem.

\begin{theorem}\label{main}
Under assumptions \ref{ass2}, \ref{ass0} and \ref{ass1} the following holds.
\begin{enumerate}
\item[(1)] Let $0<T_1<T_c$. Then there are constants $h_0>0$ and $C>0$ such that for all $0< h \leq h_0$ and all $T_1\leq T< T_c (1 - h^2 D_c) -C h^3$ one has
$$
\inf_{\Phi} \langle\Phi, (1- V^{1/2} L_{T,W} V^{1/2})\Phi\rangle <0 \,.
$$
\item[(2)] There are constants $h_0>0$ and $C>0$ such that for all $0< h \leq h_0$ and all $T> T_c(1- h^2 D_c) +  C h^{5/2}$ one has
$$
\langle\Phi, (1- V^{1/2} L_{T,W} V^{1/2})\Phi\rangle >0 \,,
$$
unless $\Phi=0$.
\end{enumerate}
\end{theorem}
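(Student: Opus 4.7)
The plan is to analyze the Birman--Schwinger operator $\mathcal{A}_T := 1 - V^{1/2} L_{T,W} V^{1/2}$ on $L^2_{\rm symm}(\R^3\times\R^3)$ by passing to center-of-mass and relative coordinates $X=(x+y)/2$, $r=x-y$, so that $p_x^2+p_y^2=\tfrac12 p_X^2+2p_r^2$ with $p_X$ and $p_r$ commuting. The two-scale heuristic suggests that the near-kernel of $\mathcal{A}_{T_c}$ at $h=0$ is approximately spanned by states of the form $\phi_*(r)\psi_h(X)$ with $\psi_h(X)=h^{3/2}\psi(hX)$ (using $V^{1/2}\chi_{\beta_c}(p_r^2-\mu)V^{1/2}\phi_* = \phi_*$), and that at order $h^2$ the problem reduces to the Ginzburg--Landau operator $p_X^2+(\Lambda_1/\Lambda_0)W(X)$ on $L^2(\R^3_X)$, whose ground-state energy encodes $D_c$ via \eqref{eq:dc}.

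For part (1), I would use the trial state $\Phi = \phi_*(r)\psi_h(X)$ with $\psi$ a normalized ground state of $p_X^2+(\Lambda_1/\Lambda_0)W(X)$. Joint Taylor expansion of $\Xi_\beta(\ch_{W,x},\ch_{W,y})$ in the two perturbations $\ch_{W,\cdot}-(p_r^2-\mu) = \tfrac14 p_X^2 \pm p_X\cdot p_r + h^2 W(h(X\pm r/2))$ exploits that the sum $\delta_A+\delta_B = \tfrac12 p_X^2 + h^2[W(h(X+r/2))+W(h(X-r/2))]$ is even in $r$, while the difference carries the $p_X\cdot p_r$ contributions, which cancel at first order. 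The surviving second-order contributions --- including $(p_X\cdot p_r)^2$ averaged to $\tfrac13 p_X^2|p_r|^2$ on spherically symmetric states and the Taylor expansion of $W(h(X\pm r/2))$ to lowest nontrivial order --- reproduce exactly the constants $\Lambda_0,\Lambda_1,\Lambda_2$ of \eqref{eq:glcoeff} after using the derivative identities for $g_1,g_2$ in \eqref{eq:auxiliary} and the definition \eqref{eq:t} of $t(p)$. Combining with the $\partial_T$ variation at $T_c$ (which via the $\cosh^{-2}$ kernel produces a factor $\propto \Lambda_2 T_c^{-1}(T-T_c)$) and the definition \eqref{eq:dc} of $D_c$, one obtains $\langle\Phi,\mathcal{A}_T\Phi\rangle = c\bigl(T-T_c(1-h^2 D_c)\bigr)\|\psi\|^2 + O(h^5)$ for a positive constant $c$, negative under the hypothesis of~(1).

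For part (2), I decompose an arbitrary $\Phi = \Phi_\parallel + \Phi_\perp$ relative to the ``condensate subspace'' $\{\phi_*(r)\sigma(X):\sigma\in L^2(\R^3_X)\}$. Assumption \ref{ass1} provides a spectral gap $\kappa>0$ of $1-V^{1/2}\chi_{\beta_c}(p_r^2-\mu)V^{1/2}$ above $\phi_*^\perp\subset L^2(\R^3_r)$, which lifts to $\langle\Phi_\perp,\mathcal{A}_T\Phi_\perp\rangle\geq(\kappa/2)\|\Phi_\perp\|^2$ for $|T-T_c|$ and $h$ small. On $\Phi_\parallel$, the same expansion as in (1) yields a lower bound by a multiple of $\bigl(T-T_c(1-h^2 D_c)\bigr)\|\sigma\|^2$ up to negligible error, non-negative under the hypothesis. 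The cross term $\langle\Phi_\parallel,\mathcal{A}_T\Phi_\perp\rangle$ is of order $h$ since the $h=0$ contribution is killed by $(1-V^{1/2}\chi_{\beta_c}V^{1/2})\phi_* = 0$, and it is absorbed by Cauchy--Schwarz distributed between the $\kappa$-gap on $\Phi_\perp$ and the Ginzburg--Landau form on $\sigma$; the balance produces the $O(h^{5/2})$ loss in the statement.

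The main obstacle is the rigorous semiclassical expansion of $\Xi_\beta(\ch_{W,x},\ch_{W,y})$: although its two arguments commute, each one individually pairs the non-commuting operators $p^2$ and $W(h\cdot)$, so the expansion must proceed through a resolvent representation of $\Xi_\beta$ with errors controlled by commutator estimates $[p^2, W(h\cdot)]=O(h)$ in weighted operator norms, leaning on the spherical symmetry of $\phi_*$ to eliminate odd powers of $p_r$ throughout. Obtaining the sharp $O(h^3)$ and $O(h^{5/2})$ remainders requires carrying out this bookkeeping in a norm strong enough to handle the cross terms of part (2) while matching the $H^1$-growth of the Ginzburg--Landau minimizer on the macroscopic scale, which is the technically delicate step and precisely where the simpler scalar-potential setup (as opposed to the magnetic one in \cite{FHaiLa}) yields substantial simplifications.
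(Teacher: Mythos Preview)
Your plan for part (1) is essentially the paper's: the product trial state $\phi_*(r)\,h^{3/2}\psi(hX)$ together with a semiclassical expansion recovering the Ginzburg--Landau coefficients $\Lambda_0,\Lambda_1,\Lambda_2$. (The remainder is $O(h^3)$, not $O(h^5)$, and one must treat separately the case where $\inf\spec(\Lambda_0 p_X^2+\Lambda_1 W)$ is not an eigenvalue, but these are minor.)

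Part (2), however, has a real gap. Applying ``the same expansion as in (1)'' to $\Phi_\parallel=\phi_*(r)\sigma(X)$ leaves an error of order $\|p_X^2\sigma\|^2$ (from the remainder after replacing $1-\cos(Z\cdot p_X)$ by $\tfrac12(Z\cdot p_X)^2$), and your cross-term bound likewise hinges on $\|(L_{T,0}-\chi_\beta(p_r^2-\mu))V^{1/2}\phi_*\sigma\|\lesssim\|p_X^2\sigma\|$ being small. For the trial state in (1) this is $O(h^2)$ by construction, but for an \emph{arbitrary} almost-minimizer $\Phi$ you have no a priori control on $\|p_X^2\sigma\|$; it could be order one and swamp the $O(h^2)$ main terms, so neither the $\Phi_\parallel$ expansion nor the Cauchy--Schwarz absorption of the cross term is justified. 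Relatedly, the assertion that the one-body gap $\kappa$ ``lifts'' to $\langle\Phi_\perp,\mathcal A_T\Phi_\perp\rangle\geq(\kappa/2)\|\Phi_\perp\|^2$ is not automatic: in the fiber over center-of-mass momentum $p_X$ the top eigenvector of $V^{1/2}L_{T,0}V^{1/2}$ is (up to normalization) $\phi_*(r)\cos(p_X\cdot r/2)$, not $\phi_*$, and uniform control in $p_X$ requires the convexity inequality $\Xi_\beta(E,E')\leq\tfrac12(\chi_\beta(E)+\chi_\beta(E'))$.

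The paper closes both holes with exactly this inequality: it gives the operator bound $1-V^{1/2}L_{T,W}V^{1/2}\geq\kappa(1-Q)-Ch^2$ with $Q=\tfrac12(UPU^*+U^*PU)$ and $U=e^{-ip_X\cdot r/2}$, and then the estimate $1-Q\gtrsim p_X^2/(E_0+p_X^2)$ on the range of $Q$ yields the missing a priori bound $\langle\psi,p_X^2(E_0+p_X^2)^{-1}\psi\rangle\lesssim h^2$ for the center-of-mass part of any almost-minimizer. A further momentum cut-off $\psi_\leq=\1(p_X^2\leq\epsilon)\psi$ upgrades this to $\|p_X^2\psi_\leq\|^2\lesssim\epsilon h^2$, the high-momentum piece $\psi_>$ is absorbed into the remainder, and the optimal choice $\epsilon=h$ balances the resulting errors to produce the $O(h^{5/2})$ in the statement. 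Your outline omits this a priori step entirely, and without it the argument for (2) does not go through.
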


\begin{remark}
Let us restate this theorem in a heuristic form. Informally, we think of the critical temperature $T_c(h)$ as the value of the parameter $T$ such that
$$
\sup \, {\rm spec} V^{1/2} L_{T,W} V^{1/2} = 1 \,.
$$
This is not a precise definition because in contrast to the one-body operator $V^{1/2} \chi_\beta(p_r^2-\mu) V^{1/2} $ it is not clear whether the two-body operator $V^{1/2} L_{T,W} V^{1/2}$, or at least the infimum of its spectrum, is monotone in $T$ and therefore the uniqueness of the value of $T$ such that $\sup \, {\rm spec} V^{1/2} L_{T,W} V^{1/2} = 1$ is not guaranteed. Ignoring this issue, as well as some technicalities connected with $T_1$ in part (1) which we discuss below, we see that our main theorem says that
$$
T_c(h)= T_c(1 - D_c h^2) + o(h^2) \,.
$$
Note that concerning the potential non-uniqueness of the critical temperature the theorem implies that, if it occurs at all, it occurs only in a temperature interval of size $o(h^2)$.
\end{remark}

\begin{remark}
Observe that $D_c$ can have either sign, depending on $W$. Thus, an external electric field $h^2 W(hx)$ can both raise and lower the critical temperature by an amount of order $h^2$. This is in contrary to the influence of magnetic fields where the critical temperature always goes down.
\end{remark}

\begin{remark}
Let us compare our results here with those in \cite{FHSS2} where we also computed the shift of the critical temperature. The results of \cite{FHSS2} concern a definition of the critical temperature in the non-linear  BCS functional, whereas here we base our definition of critical temperature on a quadratic approximation to the BCS functional around the normal state. Both notions lead to the same result to order $h^2$. A minor difference is that the setting in \cite{FHSS2} is a finite sample whereas here we work on the whole space. Technically, the methods of proof in the two approaches are quite different.
\end{remark}

\begin{remark}
The assumption in part (1) that the temperature is bounded away from zero is probably only technical. Note, however, that our result is valid for arbitrarily small $T_1>0$, as long as it is uniform in $h$. The reason for this restriction is that our expansions diverge as the temperature goes to zero. Remarkably, there is no such restriction in part (2) of the theorem.
\end{remark}

\begin{remark}
Let us emphasize that our definition of the critical temperature $T_c$ coincides with that in \cite{HHSS} (and therefore with that in \cite{FHSS,FHSS2}) and that our Assumptions \ref{ass0} and \ref{ass1} coincides with \cite[Assumption 2]{FHSS}. This is a consequence of the Birman--Schwinger principle, which also implies that, if $\alpha_*$ denotes a normalized, real-valued eigenfunction of the operator \eqref{eq:onebody}, then
$$
V^{1/2}\alpha_* = \pm \|\chi_{\beta_c}((\ii\nabla_r)^2-\mu) V^{1/2}\phi_*\|^{-1} \phi_* \,.
$$
(To get the normalization constant, we apply $\chi_{\beta_c}((-\ii\nabla_r)^2-\mu) V^{1/2}$ to both sides and use the equation for $\alpha_*$ and its normalization.)
\end{remark}


\begin{remark}
In the physics literature the two-body interaction $V$ is usually replaced by a local contact interaction. With this modification the linear two-body operator \eqref{eq:twobodyk} was studied earlier in the literature in particular in the school by Gorkov and co-authors. In the presence of a constant magnetic fields this operator was used by Werthamer et al. \cite{HeWe, WeHeHo} in their study of the upper critical field. This approach was later extended in different directions, see e.g., \cite{SchSch, La, La2}. In particular, \cite{EL1} relaxed the local approximation and was an initial motivation for our work~\cite{FHaiLa}.
\end{remark}


\subsection{Connection to BCS theory}

In this subsection we repeat our argument from from \cite{FHaiLa} and describe how the two-body operators \eqref{eq:twobodyk} and $L_{T,W}$ arise in a problem in superconductivity. Our purpose here is to give a motivation and our presentation in this subsection will be informal. For background and references on the mathematical study of BCS theory we refer to our earlier works \cite{HHSS,FHNS,HaSe,FHSS,FHSS2,FrLe,FHaiLa} and, in particular, to the review \cite{HSreview}.

We consider a superconducting sample occupying all of $\R^3$ at inverse temperature $\beta>0$ and chemical potential $\mu\in\R$. The particles interact through a two-body potential $-2V(x-y)$ and are placed in an external electric field with potential $h^2 W(hx)$. In BCS theory the state of a system is described by two operators $\gamma$ and $\alpha$ in $L^2(\R^3)$, representing the one-body density matrix and the Cooper pair wave function, respectively. The operator $\gamma$ is assumed to be Hermitian and the operator $\alpha$ is assumed to satisfy $\alpha^* =\overline{\alpha}$, where for a general operator $A$ we write $\overline A = \mathcal C A \mathcal C$ with $\mathcal C$ denoting complex conjugation. Moreover, it is assumed that
$$
0\leq \begin{pmatrix}
\gamma & \alpha \\ \overline\alpha & 1-\overline\gamma
\end{pmatrix} \leq 1 \,.
$$

In an equilibrium state the operators $\gamma$ and $\alpha$ satisfy the (non-linear) Bogolubov--de Gennes equation
\begin{align*}
& \begin{pmatrix}
\gamma & \alpha \\ \overline\alpha & 1-\overline\gamma
\end{pmatrix} 
= \left( 1+ \exp\left( \beta H_{\Delta_{V,\alpha}}\right) \right)^{-1} \,, \\
& \qquad\text{where}\qquad
\Delta_{V,\alpha}(x,y) = -2V(x-y)\alpha(x,y)
\qquad\text{and}\qquad
H_{\Delta} = 
\begin{pmatrix}
\ch_W & \Delta \\ \overline\Delta & -\overline\ch_W
\end{pmatrix} \,.
\end{align*}
Here $\Delta$ is considered as an integral operator in $L^2(\R^3)$ with integral kernel $\Delta(x,y)$. Moreover, $\ch_W $ is the one-particle operator introduced in \eqref{eq:hw}.

Note that one solution of the equation is $\gamma= (1+\exp(\beta\ch_W))^{-1}$ and $\alpha=0$. This is the \emph{normal state}. We are interested in the local stability of this solution and therefore will linearize the equation around it.

It is somewhat more convenient to write the equation in the equivalent form
$$
\begin{pmatrix}
\gamma & \alpha \\ \overline\alpha & 1-\overline\gamma
\end{pmatrix} 
= \frac12 - \frac12 \tanh\left( \frac{\beta}2 H_{\Delta_{V,\alpha}}\right) \,.
$$
Then, in view of the partial fraction expansion (also known as Mittag--Leffler series)
$$
\tanh z = \sum_{n\in\Z} \frac{1}{z -\ii (n+1/2)\pi}
$$
(where we write $\sum_{n\in\Z}$ short for $\lim_{N\to\infty} \sum_{n=-N}^N$ for conditionally convergent sums like this one; convergence becomes manifest by combining the $+n$ and $-n$ terms),
$$
\tanh\left( \frac{\beta}2 H_{\Delta}\right) = - \frac2\beta \sum_{n\in\Z} \frac{1}{\ii\omega_n - H_\Delta}
$$
with the \emph{Matsubara frequencies}
\begin{equation}
\label{eq:matsubara}
\omega_n = \pi(2n+1)T \,, \qquad n\in\Z \,.
\end{equation}
Using this formula we can expand the operator $\tanh(\beta H_\Delta/2)$ in powers of $\Delta$. Since
\begin{align*}
\frac{1}{\ii\omega_n - H_\Delta} & = \frac{1}{\ii\omega_n - H_0} + \frac{1}{\ii\omega_n - H_0} \begin{pmatrix} 0 & \Delta \\ \overline\Delta & 0 \end{pmatrix} \frac{1}{\ii\omega_n - H_0} + \ldots \\
& = \begin{pmatrix} (\ii\omega_n - \ch_W)^{-1} & 0 \\ 0 & (\ii\omega_n + \overline\ch_W)^{-1} \end{pmatrix} \\
& \qquad + \begin{pmatrix} 0 & (\ii\omega_n - \ch_W)^{-1}\Delta (\ii\omega_n + \overline\ch_W)^{-1} \\ (\ii\omega_n + \overline\ch_W)^{-1}\overline \Delta (\ii\omega_n - \ch_W)^{-1} & 0\end{pmatrix} + \ldots \,,
\end{align*}
the Bogolubov--de Gennes equation for the Cooper pair wave function becomes
$$
\alpha = \frac1\beta \sum_{n\in\Z} (\ii\omega_n - \ch_W)^{-1}\Delta_{V,\alpha} (\ii\omega_n + \overline\ch_W)^{-1} + \ldots \,,
$$
where $\ldots$ stands for terms that are higher order in $\Delta_{V,\alpha}$. The key observation now is that
\begin{equation}
\label{eq:ltsum}
\frac1\beta \sum_{n\in\Z} (\ii\omega_n - \ch_W)^{-1}\Delta_{V,\alpha} (\ii\omega_n + \overline\ch_W)^{-1} = L_{T,W} V\alpha \,.
\end{equation}
(Here $V\alpha$ on the right side is considered as a two-particle wave function, defined by $(V\alpha)(x,y)=V(x-y)\alpha(x,y)$.) This identity follows by writing
\begin{equation}
\label{eq:zetaidentity}
- \frac2\beta \sum_{n\in\Z} (\ii\omega_n - E)^{-1} (\ii\omega_n + E')^{-1} = 
- \frac{2}{\beta} \sum_{n\in\Z} \frac{1}{E+E'} \left( \frac{1}{\ii\omega_n - E} - \frac{1}{\ii\omega_n + E'} \right)
\end{equation}
and using the partial fraction expansion of $\tanh$ to recognize the right side as $\Xi_\beta(E,E')$.

Thus, the linearized Bogolubov--de Gennes equation becomes
$$
\alpha = L_{T,W} V\alpha \,.
$$
There are two ways to make the operator appearing in this equation self-adjoint. The first one is to apply the operator $L_{T,W}^{-1}$ to both sides and to subtract $V\alpha$. In this way we obtain the operator \eqref{eq:twobodyk}. The other way is to multiply both sides of the equation by $V^{1/2}$, to subtract $V^{1/2} L_{T,W} V\alpha$ and to call $\Phi=V^{1/2}\alpha$. In this way we arrive at the operator $1-V^{1/2} L_{T,W} V^{1/2}$ which appears in our main result, Theorem \ref{main}.

The upshot of this discussion is that positivity of the operator \eqref{eq:twobodyk} (or, equivalently, of $1-V^{1/2} L_{T,W} V^{1/2})$ corresponds to local stability of the normal state and the existence of negative spectrum of \eqref{eq:twobodyk} corresponds to local instability. If we define two critical local temperatures $\overline{T_c^{\rm loc}(h)}$ as the smallest temperature above which the normal state is always stable and $\underline{T_c^{\rm loc}(h)}$ as the largest temperature below which the normal state is never stable, then our theorems says that  both $\overline{T_c^{\rm loc}(h)}$ and $\underline{T_c^{\rm loc}(h)}$ are equal to $T_c(1 - D_c h^2)  + O(h) $ as $h\to 0$.

\subsection*{Acknowledgements}

We thank Edwin Langmann who initiated and co-authored our previous work \cite{FHaiLa} which forms the basis of the present paper. 
We further thank Robert Seiringer and Jan Philip Solovej for our long lasting collaboration on BCS theory. 
Further,  partial support by the U.S. National Science Foundation through grant DMS-1363432 (R.L.F.)  is acknowledged.



\section{A representation formula for the operator $L_{T,W}$}

In this section we derive a useful representation formula for the operator $L_{T,W}$ as a sum over contributions from the individual Matsubara frequencies $\omega_n$ from \eqref{eq:matsubara}. Moreover, we express the formula in terms of center of mass and relative coordinates,
$$
r = x-y \,,
\qquad
X= (x+y)/2 \,.
$$
We recall that the corresponding momenta are denoted by $p_r=-\ii\nabla_r$ and $p_X = -\ii\nabla_X$.

Our starting point is \eqref{eq:ltsum}, which can be written in the form 
\begin{align}\label{eq:repr}
\left( L_{T,W} \Delta \right)(x,y) 
= -\frac2\beta \sum_{n\in\Z} \left( \frac 1{\ii\omega_n - \ch_W} \Delta \frac 1{\ii\omega_n+ \ch_W}\right) (x,y) \,.
\end{align}
(Here we used the fact that $\ch_W=\overline{\ch_W}$.) This formula means that as an operator on $L^2(\R^3 \times \R^3)$ we have
$$
L_{T,W} =- \frac2\beta \sum_{n\in\Z} \frac 1{\ii\omega_n - \ch_{W,x}} \frac 1{\ii\omega_n+ \ch_{W,y}} \,.
$$

The strategy now will be to expand the operators $1/(\ii\omega_n\mp\ch_W)$ with respect to $W$. Clearly the leading term is
$$
L_{T,0} = - \frac2\beta \sum_{n\in\Z} \frac 1{\ii\omega_n - \ch_{0,x}} \frac 1{\ii\omega_n+ \ch_{0,y}}
$$
and the subleading correction is $h^2$ times
\begin{align*}
N_{T,W}:= - \frac2\beta \sum_{n\in\Z} & \left( - \frac 1{\ii\omega_n - \ch_{0,x}} \frac 1{\ii\omega_n+ \ch_{0,y}} W(hy)\frac 1{\ii\omega_n+ \ch_{0,y}} \right. \\
& \quad \left. + \frac 1{\ii\omega_n+ \ch_{0,x}} W(hx)\frac 1{\ii\omega_n+ \ch_{0,x}} \frac 1{\ii\omega_n - \ch_{0,y}} \right).
\end{align*}
The following lemma justifies this formal expansion.

\begin{lemma}\label{lemmadiffw0}
As an operator on $L^2(\R^3 \times \R^3)$ we have 
$$
\| L_{T,W} - L_{T,0} \| \lesssim \beta^3 h^2
$$
and
$$
\| L_{T,W} - L_{T,0} -h^2 N_{T,W} \| \lesssim \beta^5 h^4
$$
\end{lemma}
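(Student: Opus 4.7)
The plan is to apply the Born (resolvent) expansion to each Matsubara summand of the representation
\begin{equation*}
L_{T,W} = -\frac{2}{\beta}\sum_{n\in\Z} R_n^{W,x}\, \tilde R_n^{W,y}, \qquad R_n^{V,\cdot} := (\ii\omega_n - \ch_{V,\cdot})^{-1}, \qquad \tilde R_n^{V,\cdot} := (\ii\omega_n + \ch_{V,\cdot})^{-1},
\end{equation*}
and to control the remainders by means of the uniform resolvent bounds $\|R_n^{V,\cdot}\|, \|\tilde R_n^{V,\cdot}\| \leq |\omega_n|^{-1}$ (a consequence of the self-adjointness of $\ch_{V,\cdot}$) together with $\|W(h\,\cdot)\|_\infty = \|W\|_\infty$.

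For the first estimate, I would begin from the telescoping identity
\begin{equation*}
R_n^{W,x}\tilde R_n^{W,y} - R_n^{0,x}\tilde R_n^{0,y} = (R_n^{W,x}-R_n^{0,x})\,\tilde R_n^{W,y} + R_n^{0,x}\,(\tilde R_n^{W,y}-\tilde R_n^{0,y}),
\end{equation*}
and substitute the resolvent identities $R_n^{W,x} - R_n^{0,x} = h^2 R_n^{0,x} W(hx) R_n^{W,x}$ and $\tilde R_n^{W,y} - \tilde R_n^{0,y} = -h^2 \tilde R_n^{0,y} W(hy) \tilde R_n^{W,y}$. Each of the two resulting terms is a product of three resolvents and one insertion of $h^2 W$, hence has operator norm at most $h^2 \|W\|_\infty |\omega_n|^{-3}$. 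Using $\sum_{n\in\Z} |\omega_n|^{-3} = c\beta^3$ and multiplying by the prefactor $2/\beta$ gives the claimed bound (indeed with a slightly sharper factor $\beta^2$ in place of $\beta^3$).

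For the second estimate, I would iterate the resolvent identity once more:
\begin{equation*}
R_n^{W,x} = R_n^{0,x} + h^2 R_n^{0,x} W(hx) R_n^{0,x} + h^4 R_n^{0,x} W(hx) R_n^{0,x} W(hx) R_n^{W,x},
\end{equation*}
and analogously (with a minus sign in the $h^2$ term and $y$-variables) for $\tilde R_n^{W,y}$. Multiplying the two expansions, the $O(h^0)$ contribution is $R_n^{0,x}\tilde R_n^{0,y}$ which sums to $L_{T,0}$, and the $O(h^2)$ contribution is
\begin{equation*}
h^2\bigl[R_n^{0,x} W(hx) R_n^{0,x}\tilde R_n^{0,y} - R_n^{0,x}\tilde R_n^{0,y} W(hy)\tilde R_n^{0,y}\bigr],
\end{equation*}
whose sum against $-2/\beta$ equals $h^2 N_{T,W}$; matching this with the displayed form of $N_{T,W}$ uses the Matsubara symmetry $n \mapsto -n-1$, under which $\ii\omega_n \mapsto -\ii\omega_n$ and hence $R_n^{0,\cdot} \mapsto -\tilde R_n^{0,\cdot}$, allowing the first contribution to be rewritten in terms of resolvents of $\ii\omega_n + \ch_{0,x}$ and $\ii\omega_n - \ch_{0,y}$. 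The $O(h^4)$ remainder consists of finitely many terms, each a product of at most five resolvents and two factors of $W(h\,\cdot)$, hence is bounded in norm by $Ch^4\|W\|_\infty^2 |\omega_n|^{-4}$. Summing $\sum_n |\omega_n|^{-4} = c'\beta^4$ and dividing by $\beta$ yields $\lesssim h^4 \beta^3$, which implies the stated bound.

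The principal nuisance is the bookkeeping in the second step: one must verify that the first-order coefficient produced by the Born expansion truly coincides, after invoking the Matsubara symmetry $n \mapsto -n-1$, with the operator $N_{T,W}$ as displayed in the paper. Once this identification is settled, the remaining estimates reduce to routine operator-norm control of products of the form (resolvent)$\cdot$(bounded)$\cdot$(resolvent), and the Matsubara sums converge absolutely by the $|\omega_n|^{-3}$ and $|\omega_n|^{-4}$ decay.
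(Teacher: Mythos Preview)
Your proof is correct and takes essentially the same approach as the paper's: apply the resolvent identity to each Matsubara summand, bound every resolvent in norm by $|\omega_n|^{-1}$, and sum over $n$; the paper's proof is a two-line sketch to precisely this effect, and your symmetric product expansion is a cosmetic variant of the paper's iterated one-sided expansion. Your observation that the argument actually delivers the sharper powers $\beta^2 h^2$ and $\beta^3 h^4$ is correct; as for the ``principal nuisance,'' the paper's displayed $N_{T,W}$ (and the third term of the expansion in its proof) appear to carry sign typos in the $x$-resolvents, so the Matsubara reflection $n\mapsto -n-1$ in fact produces an extra minus sign and does \emph{not} reconcile the two expressions --- your directly obtained first-order term is the intended $N_{T,W}$.
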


\begin{proof}
Using the resolvent identity we write
\begin{align*}
\frac 1{\ii\omega_n - \ch_{W,x}} \frac 1{\ii\omega_n+ \ch_{W,y}}
& = \frac 1{\ii\omega_n - \ch_{0,x}} \frac 1{\ii\omega_n+ \ch_{0,y}} \\ 
& \quad -  \frac 1{\ii\omega_n - \ch_{0,x}} \frac 1{\ii\omega_n+ \ch_{W,y}} h^2 W(hy)\frac 1{\ii\omega_n+ \ch_{0,y}} \\
& \quad + \frac 1{\ii\omega_n+ \ch_{W,x}} h^2 W(hx)\frac 1{\ii\omega_n+ \ch_{0,x}} \frac 1{\ii\omega_n - \ch_{W,y}} \,.
\end{align*}
The first term on the right side, when summed with respect to $n$, corresponds to the operator $L_{T,0}$. In the remaining terms we use $W\in L^\infty(\R^3)$ and bound each resolvent in norm by $|\omega_n|^{-1}$. The resulting bound is summable with respect to $n$. This proves the first bound. For the proof of the second bound we expand the resolvents once more.
\end{proof}

In the remainder of this section we will do two things, namely bring the operator $L_{T,0}$ in a more explicit form and extract the leading term from the operator $N_{T,W}$. While in Lemma \ref{lemmadiffw0} we considered $L_{T,W}$ as an operator on $L^2(\R^3 \times \R^3)$, we will from now on restrict it to the subspace $L^2_{\rm symm}(\R^3 \times \R^3)$.

In order to investigate the operator $L_{T,0}$ we denote by $g^{z}$ the integral kernel of $1/(z- \ch_0)$, that is,
$$
\frac 1{z- \ch_0} (x,x') = g^z(x-x') \,. 
$$
Using center-of-mass and relative coordinates we can rewrite \eqref{eq:repr} as
\begin{align*}
\left(L_{T,0} \Delta\right)(X+\frac r2,X-\frac r2)
& = -\frac2\beta \sum_{n\in\Z} \iint_{\R^3\times\R^3} dY ds \, \Delta(Y+\frac s2,Y-\frac s2) \\
& \qquad\qquad\qquad \times g^{\ii\omega_n}(X-Y+\frac{r-s}{2}) g^{-\ii\omega_n}(X-Y - \frac{r-s}{2}) \\
& = \iint_{\R^3\times\R^3} dZ ds \, k_{T}(Z,r-s) \Delta(X-Z+\frac s2,X-Z-\frac s2)
\end{align*}
with 
$$
k_{T}(Z,\rho) :=  -\frac2\beta \sum_{n\in\Z}  g^{\ii\omega_n}(Z+\frac{\rho}{2}) g^{-\ii\omega_n}(Z - \frac{\rho}{2}) \,.
$$
Next, we use the fact that $\psi(X - Z) = (e^{-\ii Z\cdot p_X} \psi)(X)$ to write
\begin{align}
\label{eq:reprproof}
\left(L_{T,0} \Delta\right)(X+\frac r2,X-\frac r2)
= \iint_{\R^3\times\R^3} dZ ds \, k_{T}(Z,r-s) \left( e^{-\ii Z\cdot p_X} \Delta\right)(X+\frac s2,X-\frac s2) \,.
\end{align}
We claim that in this formula we can replace  $e^{-\ii Z\cdot p_X}$ by $\cos(Z\cdot p_X)$. To do so, we change variables $Z\mapsto-Z$, $r\mapsto-r$ and $s\mapsto-s$ and use $\Delta(x,y)=\Delta(y,x)$ and $k_{T}(-Z,-r+s)=k_{T}(Z,r-s)$  in order to obtain the same formula as in \eqref{eq:reprproof}, but with $e^{-\ii Z\cdot p_X}$ replaced by $e^{+\ii Z\cdot p_X}$. Adding the two formulas we finally find
\begin{equation}\label{defpX}
\left( L_{T,0}\Delta\right)(X+\frac r2,X-\frac r2) = \iint_{\R^3\times\R^3} dZ\,ds\, k_{T}(Z,r-s) \left(\cos(Z\cdot p_X)\Delta\right)(X+\frac s2,X-\frac s2).
\end{equation}

Next, we derive a convenient representation of $k_{T}(Z,\rho)$. Setting $\ell=p+q$ and $k=(p-q)/2$ and recalling \eqref{eq:ltsum} and \eqref{eq:zetaidentity}, we calculate
\begin{align}
\label{eq:semest1}
k_{T}(Z,\rho) & = - \frac2\beta \sum_{n\in\Z} \iint_{\R^3\times\R^3} \frac{dp}{(2\pi)^3}\,\frac{dq}{(2\pi)^3} \frac{e^{\ii p\cdot(Z+\frac\rho2)}}{\ii\omega_n-p^2+\mu} \frac{e^{\ii q\cdot(Z-\frac\rho2)}}{\ii\omega_n+q^2-\mu} \notag \\
& = \iint_{\R^3\times\R^3} \frac{dp}{(2\pi)^3}\,\frac{dq}{(2\pi)^3} L(p,q) e^{\ii p\cdot(Z+\frac\rho2)+\ii q\cdot(Z-\frac\rho2)} \notag \\
& = \iint_{\R^3\times\R^3} \frac{d\ell}{(2\pi)^3}\,\frac{dk}{(2\pi)^3} L(k+\frac\ell2,k-\frac\ell2) e^{\ii\ell\cdot Z +\ii k\cdot\rho}
\end{align}
with
\begin{equation}
\label{eq:semestl}
L(p,q) : = \frac{\tanh\frac{\beta(p^2-\mu)}2+\tanh\frac{\beta(q^2-\mu)}2}{p^2-\mu+q^2-\mu} \,.
\end{equation}

Let us explain the intuition for the following. Since the external field is varying on the scale $1/h$, which is much larger than the typical distance of between the particles, each momentum $p_X$ will pick up an additional factor of $h$. Therefore, we expect the leading term in \eqref{defpX} to be given by the corresponding operator with $\cos(Z\cdot p_X)$ replaced by $1$. We will justify this approximation in the following lemma. The next order, namely $-(1/2) (Z\cdot p_X)^2$, which will ultimately give rise to the Laplacian in Ginzburg--Landau theory, will be discussed in the following section.

In order to compute the right side of \eqref{defpX} with $\cos(Z\cdot p_X)$ replaced by $1$, we first compute, using \eqref{eq:semest1},
\begin{align}\label{eq:deltafcn}
\int_{\R^3} dZ \, k_{T}(Z,\rho) = \int_{\R^3} \frac{dk}{(2\pi)^3} L(k,k) e^{\ii k\cdot \rho} \,.
\end{align}
This implies that
$$
\iint_{\R^3\times\R^3} dZ\,ds\, k_{T}(Z,r-s) \Delta(X+\frac s2,X-\frac s2) = \left( \chi_\beta(p_r^2-\mu) \Delta \right)(X+r/2,X-r/2) \,,
$$
that is,
\begin{equation}\label{Lt0exp}
L_{T,0} = \chi_\beta(p_r^2 - \mu) - \int_{\R^3} dZ\, k_{T}(Z) \left(1- \cos(Z\cdot p_X)\right) \,,
\end{equation}
where $k_T(Z)$ denotes the operator in $L^2_{\rm symm}(\R^3)$ with integral kernel $k_T(Z,r-s)$.

We now quantify the replacement of $\cos(Z\cdot p_X)$ by $1$.

\begin{lemma}\label{Ltleading}
$$
\left\| \left( L_{T,0} - \chi_\beta(p_r^2 - \mu) \right)\Delta \right\| \lesssim \beta^3 \left\| p_X^2 \Delta \right\|
$$
\end{lemma}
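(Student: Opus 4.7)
The plan is to pass to Fourier variables, with $k$ conjugate to the relative coordinate $r$ and $\ell$ conjugate to the center-of-mass coordinate $X$. In this representation $L_{T,0}$ is the multiplication operator with symbol
\[
M(k,\ell) := \Xi_\beta\bigl((k+\ell/2)^2 - \mu,\,(k-\ell/2)^2 - \mu\bigr),
\]
while $\chi_\beta(p_r^2-\mu)$ is multiplication by $M(k,0) = \chi_\beta(k^2-\mu)$. Since $\|p_X^2\Delta\|^2 = \iint|\ell|^4|\hat\Delta|^2\,dk\,d\ell$ by Plancherel, the lemma reduces to the pointwise inequality
\[
|M(k,\ell) - M(k,0)| \leq C\beta^3|\ell|^2 \qquad\text{for all }k,\ell\in\R^3.
\]

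For the pointwise bound, I would introduce $\bar E := k^2 + |\ell|^2/4 - \mu$ and $\eta := k\cdot\ell$, so the two arguments of $\Xi_\beta$ are $\bar E\pm\eta$. Applying the addition formulas $\tanh a+\tanh b = \sinh(a+b)/(\cosh a\cosh b)$ and $2\cosh a\cosh b = \cosh(a+b)+\cosh(a-b)$, together with double-angle identities, yields the closed form
\[
M(k,\ell) = \chi_\beta(\bar E)\,\frac{\cosh^2(\beta\bar E/2)}{\cosh^2(\beta\bar E/2) + \sinh^2(\beta\eta/2)},
\]
and therefore
\[
M(k,\ell) - \chi_\beta(\bar E) = -\chi_\beta(\bar E)\,\frac{\sinh^2(\beta\eta/2)}{\cosh^2(\beta\bar E/2) + \sinh^2(\beta\eta/2)}.
\]
Since $\cosh^2(\beta\bar E/2)\geq 1$, the denominator is at least $\cosh^2(\beta\eta/2)$, so the fraction is bounded by $\tanh^2(\beta\eta/2)\leq (\beta\eta/2)^2 \leq \beta^2 k^2|\ell|^2/4$. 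The algebraic identity $\chi_\beta(\bar E)\bar E = \tanh(\beta\bar E/2) \in [-1,1]$ combined with $k^2 = \bar E - (|\ell|^2/4 - \mu)$ gives $|\chi_\beta(\bar E)|\,k^2 \leq 1 + \tfrac\beta2(|\mu|+|\ell|^2/4)$. Multiplying these bounds yields $|M(k,\ell) - \chi_\beta(\bar E)| \lesssim \beta^3|\ell|^2$ on bounded $|\ell|$; for $|\ell|^2 \geq 1/\beta^2$ the trivial estimate $|M|, |\chi_\beta|\leq \beta/2$ gives $|M-\chi_\beta(\bar E)|\leq \beta \leq \beta^3|\ell|^2$. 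The two regimes combine to the desired uniform bound.

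The residual term $\chi_\beta(\bar E)-\chi_\beta(E) = \chi_\beta(E+|\ell|^2/4)-\chi_\beta(E)$ is handled by the mean value theorem and the estimate $|\chi_\beta'|\leq C\beta^2$, which follows from the scaling $\chi_\beta(E) = \beta\,g_0(\beta E)$ with $g_0'$ bounded; this gives $|\chi_\beta(\bar E)-\chi_\beta(E)|\leq C\beta^2|\ell|^2/4$, absorbed into $C\beta^3|\ell|^2$ under the implicit lower bound on $\beta$ in the regime of interest. Adding the two pieces and invoking Plancherel proves the lemma.

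The main obstacle is the uniform control of $\chi_\beta(\bar E)\cdot k^2$ for unbounded $k$: the naive estimate $|\chi_\beta|\leq \beta/2$ produces an irreducible factor $k^2$ when one bounds $\eta^2\leq k^2|\ell|^2$, and it is precisely the algebraic identity $\chi_\beta(\bar E)\bar E = \tanh(\beta\bar E/2)$ (bounded uniformly by $1$) that tames this factor at the cost of only a mild $\beta|\ell|^2$-type correction.
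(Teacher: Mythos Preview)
Your argument is correct and takes a genuinely different route from the paper. The paper works in position space: starting from the representation \eqref{Lt0exp}, it bounds $1-\cos(Z\cdot p_X)$ by $\tfrac12(Z\cdot p_X)^2$, distributes the factor $|Z|^2$ onto the resolvent kernels $g^{\pm i\omega_n}$, and then applies Young's convolution inequality together with the $L^1$ bounds on $g^{i\omega_n}$ and $|\cdot|^2 g^{i\omega_n}$ from \cite[Lemma~9]{FHaiLa}, summing over Matsubara frequencies. Your approach instead exploits that $L_{T,0}$ is a Fourier multiplier in the $(r,X)$ variables and reduces everything to a pointwise estimate on the symbol $M(k,\ell)$; the hyperbolic identity yielding the closed form for $M$ is a nice shortcut that replaces the Matsubara summation entirely.

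Two remarks. First, your pointwise bound actually delivers $|M(k,\ell)-M(k,0)|\lesssim(\beta+\beta^2+\beta^3|\mu|)\,|\ell|^2$ rather than a pure $\beta^3|\ell|^2$; you recover the stated power only after invoking a lower bound on $\beta$. This is harmless for every application of the lemma in the paper (where $T$ stays near $T_c$), but it does mean you have not proved the lemma exactly as written with a constant independent of $\beta$. Second, the paper's position-space method, while heavier, is the one that generalizes to the magnetic setting of \cite{FHaiLa}: there the components of the magnetic momentum do not commute, the operator is no longer a Fourier multiplier in $X$, and your diagonalization is unavailable. So your argument is more elementary here, but the paper's is the template for the harder problem.
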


\begin{proof}
We have to bound the integral on the right side of \eqref{Lt0exp}. For this we consider a single term in the definition of $k_T(Z,\rho)$. For fixed $r\in\R^3$ we estimate using Minkowski's inequality
\begin{align*}
& \left( \int_{\R^3} dX \left| \iint_{\R^3\times\R^3} dZ\,ds\, g^{\ii\omega_n}(Z+(r-s)/2) g^{-\ii\omega_n}(Z-(r-s)/2) \right.\right. \\
& \qquad \qquad \times \left.\left. \phantom{\int_{\R^6}} \!\!\!\!\!\!\!\!\!\! \left((1-\cos(Z\cdot p_X))\Delta\right)(X+s/2,X-s/2) \right|^2 \right)^{1/2} \\
& \leq \iint_{\R^3\times\R^3} dZ\,ds \left| g^{\ii\omega_n}(Z+(r-s)/2) g^{-\ii\omega_n}(Z-(r-s)/2) \right| \\
& \qquad\qquad \times \left( \int_{\R^3} dX \left| \left((1-\cos(Z\cdot p_X))\Delta\right)(X+s/2,X-s/2) \right|^2 \right)^{1/2}.
\end{align*}
Now we bound for fixed $Z,s\in\R^3$
\begin{align*}
& \left( \int_{\R^3} dX \left| \left((1-\cos(Z\cdot p_X))\Delta\right)(X+s/2,X-s/2) \right|^2 \right)^{1/2} \\
& \leq \left\| \frac{1-\cos(Z\cdot p_X)}{(Z\cdot p_X)^2} \right\| \left( \int_{\R^3} dX \left| \left((Z\cdot p_X)^2\Delta\right)(X+s/2,X-s/2) \right|^2 \right)^{1/2} \\
& \lesssim |Z|^2 t(s)
\end{align*}
where
$$
t(s) := \left( \int_{\R^3} dX \left| \left(p_X^2\Delta\right)(X+s/2,X-s/2) \right|^2 \right)^{1/2} \,.
$$
Thus, the quantity we are interested in is bounded by a constant times
$$
\iint_{\R^3\times\R^3} dZ\,ds \left| g^{\ii\omega_n}(Z+(r-s)/2) g^{-\ii\omega_n}(Z-(r-s)/2) \right| |Z|^2 t(s) \,.
$$
Using
$$
|Z|^2 \leq \frac12 \left( \left| Z + \frac{r-s}{2} \right|^2 + \left| Z - \frac{r-s}{2}\right|^2 \right)
$$
we can bound the above quantity by
$$
\frac12 \left( \left( \left( |\cdot|^2 g^{\ii\omega_n}\right)\ast g^{-\ii\omega_n} * t \right)(r) + \left( g^{\ii\omega_n} \ast \left( |\cdot|^2 g^{-\ii\omega_n}\right) * t \right)(r) \right).
$$
The $L^2$ norm of this term with respect to $r$ is bounded according to Young's convolution inequality by
$$
\frac12 \left( \left\| |\cdot|^2 g^{\ii\omega_n} \right\|_1 \left\| g^{-\ii\omega_n} \right\|_1 + \left\| g^{\ii\omega_n} \right\|_1 \left\| |\cdot|^2 g^{-\ii\omega_n} \right\|_1 \right) \|t\|_2 \,.
$$
By \cite[Lemma 9]{FHaiLa} this expression is summable with respect to $n$ and therefore the left side in the lemma is bounded by a constant times $\|t\|_2 = \left\| p_X^2 \Delta \right\|$, as claimed.
\end{proof}

This concludes our discussion of the leading term $L_{T,0}$. We now aim at extracting the leading term from the operator $N_{T,W}$ and we concentrate on a term of the form
$$
\iint_{\R^3\times\R^3} dx' dy'\, \left( \frac 1{\ii\omega_n - \ch_0} W(h\cdot) \frac 1{\ii\omega_n - \ch_0} \right) (x,x')  \frac 1{-\ii\omega_n- \ch_0} (y,y') \Delta(x',y') \,.
$$
We introduce again center of mass and relative coordinates  $X=(x+y)/2$, $r=x-y$, $Y=(x'+y')/2$ and $s=x'-y'$. In order to obtain concise expressions we introduce the abbreviation $$ \zeta_X^r = X + r/2, \quad \zeta_Y^{-s} = Y-s/2,$$
where the second term should just show the consistency of the symbol. With these definitions we obtain 
\begin{align}
&\int_{\R^6} dx' dy' \left(\frac 1{\ii\omega_n - \ch_0} W(h\cdot) \frac 1{\ii\omega_n - \ch_0} \right) (x,x')  \frac 1{-\ii\omega_n- \ch_0} (y,y') \Delta(x',y') \nonumber \\
&=\int_{\R^9} dY ds  dz'  \, g^{\ii\omega_n}(\zeta_X^r-z') W(h z') g^{\ii\omega_n} (z' - \zeta_Y^{s} )  g^{-\ii\omega_n}(\zeta_{X-Y}^{s-r} )\Delta (\zeta_Y^{s} ,\zeta_Y^{-s} ) \nonumber\\
& = \int_{\R^9} dY ds  dz \,  g^{\ii\omega_n}(\frac r2-z) W(h X + hz) g^{\ii\omega_n}(z + \zeta_{X-Y}^{-s} ) g^{-\ii\omega_n} (\zeta_{X-Y}^{s-r} )\Delta (\zeta_Y^{s} ,\zeta_Y^{-s} ) \nonumber\\
& = \int_{\R^9} dZ ds  dz \, g^{\ii\omega_n}(\frac r2-z) W(h X + hz) g^{\ii\omega_n}(z + \zeta_Z^{-s} ) g^{-\ii\omega_n}(\zeta_Z^{s-r} )\Delta (\zeta_{X-Z}^{s} ,\zeta_{X-Z}^{-s})\nonumber \\ \label{Werr}
&= \int_{\R^9} dZ ds  dz \, g^{\ii\omega_n}(\frac r2-z) W(h X + hz) g^{\ii\omega_n} (z + \zeta_Z^{-s})  g^{-\ii\omega_n}(\zeta_{Z}^{s-r})  \left( e^{-\ii Z\cdot p_X} \Delta\right)(\zeta_{X}^{s},\zeta_{X}^{-s}) \,, 
\end{align}
where in the last step we used again 
\begin{align*}
\alpha (\zeta_{X-Z}^{s} ,\zeta_{X-Z}^{-s}) & = \alpha(X-Z +s/2, X-Z - s/2) \\
 &= \left( e^{-\ii Z\cdot p_X}\alpha\right)(X + s/2, X-s/2) \\
&=   \left( e^{-\ii Z\cdot p_X} \alpha\right)(\zeta_{X}^{s},\zeta_{X}^{-s}) \,. 
\end{align*}
We claim that to leading order we can replace $W(hX+hz)$ in this integral by $W(hX)$. Therefore we define
\begin{align}
\left(\tilde N_{T,W} \Delta\right)(\zeta_X^r, \zeta_X^{-r})  := W(hX) \iint_{\R^3\times\R^3} dZ ds \, \ell_{T}(Z,r-s) \left(e^{-\ii Z\cdot p_X} \Delta\right)(\zeta_X^s,\zeta_X^{-s})
\end{align}
and 
\begin{align}\label{equ17}
\ell_{T}(Z,\rho) :=  \frac2\beta \sum_{n\in\Z}  \left ( \left({g^{\ii\omega_n}} \ast {g^{\ii\omega_n}}\right) (\zeta_{Z}^{\rho}) g^{-\ii\omega_n}(\zeta_{Z}^{-\rho}) + 
{g^{\ii\omega_n}} (\zeta_{Z}^{\rho}) \left( {g^{-\ii\omega_n}} \ast {g^{-\ii\omega_n}}\right) (\zeta_{Z}^{-\rho}) \right).
\end{align}

\begin{lemma}\label{lemmaequ17}
$$ 
\left\| \left(N_{T,W} - \tilde N_{T,W}\right) \Delta \right\| \lesssim h \left( \left\| \Delta\right\| + \left\| |r| \Delta\right\|\right) \,.
$$
\end{lemma}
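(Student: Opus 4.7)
The plan is to follow the strategy of Lemma~\ref{Ltleading}, starting from the explicit integral representation \eqref{Werr} for $N_{T,W}$. For each of the two contributions making up $N_{T,W}$, the difference $(N_{T,W}-\tilde N_{T,W})\Delta$ has exactly the same integral form but with $W(hX+hz)$ replaced by $W(hX+hz)-W(hX)$. The assumption $W\in W^{1,\infty}(\R^3)$ yields the Lipschitz bound
$$
|W(hX+hz)-W(hX)| \le h\|\nabla W\|_\infty\,|z|,
$$
which supplies the factor $h$ asserted in the lemma; the remaining task is to absorb the extra factor $|z|$ into the Matsubara-summable convolution structure.

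The central device is the splitting $|z| \le |z-r/2|+|r/2|$. In the first piece, the factor $|z-r/2|$ combines with the Green's function $g^{\ii\omega_n}(r/2-z)$, after the substitution $u=r/2-z$, to produce a weighted kernel $|\cdot|\,g^{\ii\omega_n}$. Performing the $z$-integral then replaces the convolution $g^{\ii\omega_n}\ast g^{\ii\omega_n}$ appearing in \eqref{equ17} by $(|\cdot|g^{\ii\omega_n})\ast g^{\ii\omega_n}$. In the second piece, $|r/2|$ is an external factor and the $z$-integral reproduces the same kernel $\ell_T$ as in $\tilde N_{T,W}$. The analogous term in $N_{T,W}$ with $W$ attached to the other resolvent is treated identically.

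Once these two pieces are written out, I would take $L^2(\R^3_X\times\R^3_r)$ norms exactly as in the proof of Lemma~\ref{Ltleading}: the unitarity of $e^{-\ii Z\cdot p_X}$ on $L^2_X$ reduces the $X$-integration to $t(s)=\|\Delta(\cdot+s/2,\cdot-s/2)\|_{L^2_X}$, which satisfies $\|t\|_{L^2_s}=\|\Delta\|$ and $\||s|t\|_{L^2_s}=\||r|\Delta\|$ by the symmetry of $\Delta$. Setting $\rho=r-s$ converts the remaining $(Z,s)$-integrals into convolutions in $\rho$, to which Minkowski's and Young's inequalities apply. The first piece yields a bound of the form $\||\cdot|g^{\ii\omega_n}\|_1\,\|g^{\ii\omega_n}\|_1\,\|g^{-\ii\omega_n}\|_1\,\|\Delta\|$; the second piece, with the external $|r|$ factor, is handled via $|r|(K\ast t)(r)\le((|\cdot|K)\ast t)(r)+(K\ast(|\cdot|t))(r)$, which contributes both a $\|\Delta\|$ term (after distributing the extra weight onto one of the Green's functions) and the $\||r|\Delta\|$ term.

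Finally, summability of the resulting sums over $n$ of products $\beta^{-1}\,\||\cdot|^{j}g^{\pm\ii\omega_n}\|_1$ with $j\in\{0,1\}$ follows from \cite[Lemma~9]{FHaiLa}, and collecting all contributions yields the claimed bound $h(\|\Delta\|+\||r|\Delta\|)$. The main obstacle is the bookkeeping for the $|z|$-splitting: a naive estimate using $|z|$ directly does not fit into any single convolution, and one must allocate it either to an existing Green's function factor (thereby introducing a weight that remains summable by \cite[Lemma~9]{FHaiLa}) or to the external $r$-variable (which is what forces the $\||r|\Delta\|$ term to appear in the bound). Once this allocation is made, the remaining steps closely mirror the proof of Lemma~\ref{Ltleading}.
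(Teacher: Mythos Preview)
Your proposal is correct and follows essentially the same approach as the paper. The only bookkeeping difference is that the paper performs a single four-term splitting
\[
|z| \le \tfrac12\bigl(|z-r/2| + |z+Z-s/2| + |Z-(r-s)/2| + |s|\bigr),
\]
allocating the weight directly onto the three Green's function arguments and onto $|s|$, whereas you first split $|z|\le |z-r/2|+|r|/2$ and then redistribute the external $|r|$ factor via $|r|\le |r-s|+|s|$ (and one more split of $|r-s|$ onto the kernel); both routes produce the same Young-inequality bounds and the same summable products $\||\cdot|^{j}g^{\pm\ii\omega_n}\|_1$ controlled by \cite[Lemma~9]{FHaiLa}.
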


\begin{proof}
In \eqref{Werr}, we write
$$
W(hX + hz) = W(hX) + h \int_0^1 z\cdot \nabla W(hX + t hz) \, dt
$$
and then we have to estimate the norm of the error term coming from the $t$-integral. In order to calculate the $L^2(\R^3 \times \R^3)$-norm of the corresponding expression in the $(X,r)$-variables we first fix $r \in \R^3$ and consider the following term, which has a prefactor of $h$ in front,
\begin{multline}
\left( \int_{\R^3} dX \left | \int_{\R^9} dZ ds  dz \, g^{\ii\omega_n}(\frac r2-z) \int_0^1 z\cdot \nabla W(hX + t hz) dt \right.\right. \times \\ \left. \left. \times g^{\ii\omega_n} (z + \zeta_Z^{-s})  g^{-\ii\omega_n}(\zeta_{Z}^{s-r})  \left( e^{-\ii Z\cdot p_X} \Delta\right)(\zeta_{X}^{s},\zeta_{X}^{-s}) \right|^2 \right)^{1/2} \,.
\end{multline} 
Using Minkowski's inequality we can bound this by 
 \begin{align*}
&\int_{\R^9} dZ ds  dz \, |g^{\ii\omega_n}(\frac r2-z) | | g^{\ii\omega_n} (z + \zeta_Z^{-s}) || g^{-\ii\omega_n}(\zeta_{Z}^{s-r})| \\
& \qquad\qquad  \times \left( \int_{\R^3} dX \left| \int_0^1 z\cdot \nabla W(hX + t hz) dt \left( e^{-\ii Z\cdot p_X} \Delta\right)(\zeta_{X}^{s},\zeta_{X}^{-s}) \right|^2 \right)^{1/2} \\
& \qquad \leq  \int_{\R^9} dZ ds  dz \, |g^{\ii\omega_n}(\frac r2-z) | | g^{\ii\omega_n} (z + \zeta_Z^{-s}) || g^{-\ii\omega_n}(\zeta_{Z}^{s-r})| \\
& \qquad\qquad  \times  |z| \| \nabla W\|_\infty  \left( \int_{\R^3} dX \left | \left(e^{-\ii Z\cdot p_X} \Delta\right)(X+s/2,X-s/2) \right|^2 \right)^{1/2} \\
& \qquad = \int_{\R^9} dZ ds  dz \, |g^{\ii\omega_n}(\frac r2-z) | | g^{\ii\omega_n} (z + \zeta_Z^{-s}) || g^{-\ii\omega_n}(\zeta_{Z}^{s-r})| |z| \|\nabla W\|_\infty m(s) \, 
 \end{align*}
where
$$
m(s) : = \left( \int_{\R^3} dX\, | \Delta(X+s/2,X-s/2)|^2 \right)^{1/2}
$$
and where we used the unitarity of $e^{-\ii Z\cdot p_X}$ in the last equality. 

The inequality
$$  
|z| \leq \frac12 |z-r/2| + \frac12 |z+Z-s/2| + \frac12|Z-(r-s)/2| + \frac12 |s|
$$
leads to four terms, which we bound separately. The term with $|z-r/2|$ can be bounded by
\begin{align*}
& \| \nabla W\|_\infty \int_{\R^9} dZ ds  dz \, |g^{\ii\omega_n}( r/2-z) | |r/2 - z|  | g^{\ii\omega_n} (z + Z- s/2) || g^{-\ii\omega_n}(Z + (s-r)/2)|  m(s) \\ 
& \qquad = \| \nabla W\|_\infty \left( \left| \left|\cdot\right| g^{\ii\omega_n} \right| \ast |g^{\ii\omega_n}| \ast |g^{-\ii\omega_n}| \ast m\right) (r) \,.
\end{align*}
According to Young's inequality, the $L^2$ norm of this term is bounded by $\|\nabla W\|_\infty$ times
$$
\||\cdot| g^{\ii\omega_n}\|_1 \|g^{\ii\omega_n}\|_1 \|g^{-\ii\omega_n}\|_1 \|m\|_2 = \||\cdot| g^{\ii\omega_n}\|_1 \|g^{\ii\omega_n}\|_1 \|g^{-\ii\omega_n}\|_1 \|\Delta\|_2 \,.
$$
According to \cite[Lemma 9]{FHaiLa} this expression is summable with respect to $n$ and therefore the contribution of this term to $\left(N_{T,W} - \tilde N_{T,W}\right) \Delta$ is bounded by a constant times $h \|\Delta\|_2$.

The argument for the terms involving $|z+Z-s/2|$ and $|Z-(r-s)/2|$ is similar.

The term with $|s|$ can be bounded by
\begin{align*}
& \| \nabla W\|_\infty \int_{\R^9} dZ ds  dz \, |g^{\ii\omega_n}( r/2-z) | |s|  | g^{\ii\omega_n} (z + Z- s/2) || g^{-\ii\omega_n}(Z + (s-r)/2)|  m(s) \\ 
& \qquad = \| \nabla W\|_\infty \left(|g^{\ii\omega_n}|\ast |g^{\ii\omega_n}| \ast |g^{-\ii\omega_n}| \ast \left( |\cdot| m\right) \right) (r),
\end{align*}
According to Young's inequality, the $L^2$ norm of this term is bounded by $\|\nabla W\|_\infty$ times
$$
\| g^{\ii\omega_n}\|_1 \|g^{\ii\omega_n}\|_1 \|g^{-\ii\omega_n}\|_1 \left\| \left|\cdot\right| m\right\|_2 = \|g^{\ii\omega_n}\|_1 \|g^{\ii\omega_n}\|_1 \|g^{-\ii\omega_n}\|_1 \left\|\left|\cdot\right|\Delta\right\|_2 \,.
$$
Again by \cite[Lemma 9]{FHaiLa} this expression is summable with respect to $n$ and therefore the contribution of this term to $\left(N_{T,W} - \tilde N_{T,W}\right) \Delta$ is bounded by a constant times $h \left\|\left|\cdot\right|\Delta\right\|_2$. This proves the lemma.
\end{proof}


\section{Representation of $L_{T,W}$ on the states $\Delta = \psi(X) \tau(r)$}

We will argue below that we are able to restrict to a specific class of states, which are of the form
$\Delta(X+r/2,X-r/2) = \psi(X) \tau(r)$. Due to the symmetry of $\Delta$, $\tau$ has to be an even function, but 
in fact we will later see that $\tau$ can be assumed as radial, and for the proof of our main theorem 
$\tau$ will be proportional to $V^{1/2}(r) \phi_*(r)$, where $\phi_*(r)$ is the zero eigenstate of 
$1 - V^{1/2} \chi_{\beta_c}(p_r^2 - \mu) V^{1/2}$. 

The following corollary is an immediate consequence of the bounds in the previous section.

\begin{corollary}\label{reprproduct}
If $\Delta(X+r/2,X-r/2) = \psi(X)\tau(r)$ with $\tau$ even, then
\begin{align}
\label{eq:repr2a}
\langle \Delta, L_{T,W} \Delta \rangle & = \langle\psi,\psi\rangle 
\langle \tau,\chi_\beta(p_r^2-\mu)\tau\rangle \nonumber \\ 
& - \int_{\R^3} dZ\, \langle\psi,(1 - \cos(Z\cdot p_X))\psi\rangle  \iint_{\R^3\times\R^3} dr ds \, \overline{\tau(r)}  k_{T}(Z,r-s) \tau(s)\nonumber \\
& + h^2  \int_{\R^3} dZ\, \langle\psi,W(hX)  e^{-\ii Z\cdot p_X} \psi\rangle  \iint_{\R^3\times\R^3} dr ds \, \overline{ \tau(r)} \ell_{T}(Z,r-s)   \tau(s)\nonumber  \\
& +  O(h^3) \|\psi\|^2 \|\tau\| \||\cdot| \tau\| \,.
\end{align}
\end{corollary}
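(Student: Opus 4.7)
The plan is to assemble the claimed identity directly from the two approximation results of Section 2, together with the exact representation \eqref{Lt0exp} of $L_{T,0}$. First, I would invoke Lemma \ref{lemmadiffw0} (for $\beta$ bounded in the regime of interest) to replace $L_{T,W}$ by $L_{T,0}+h^{2}N_{T,W}$ in the quadratic form, at a cost of $O(\beta^{5}h^{4})\|\Delta\|^{2}$. Next, by Lemma \ref{lemmaequ17}, I would further replace $h^{2}N_{T,W}$ by $h^{2}\tilde N_{T,W}$ at a cost of $h^{3}(\|\Delta\|+\||r|\Delta\|)\|\Delta\|$ in the quadratic form. Thus it remains to compute $\langle\Delta, L_{T,0}\Delta\rangle$ and $h^{2}\langle\Delta,\tilde N_{T,W}\Delta\rangle$ exactly on a product state $\Delta(X+r/2,X-r/2)=\psi(X)\tau(r)$.

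For the $L_{T,0}$ piece, I would use \eqref{Lt0exp}: the operator $\chi_{\beta}(p_r^{2}-\mu)$ acts only on the relative variable and the convolution kernel $k_T(Z,r-s)$ acts only on $r$, while $1-\cos(Z\cdot p_X)$ acts only on $X$. Since the two factors of the product state separate, the $\chi_\beta$-term contributes $\|\psi\|^{2}\langle\tau,\chi_\beta(p_r^{2}-\mu)\tau\rangle$ and the $k_T(Z)$-term contributes, after the $Z$-integration, the second line of \eqref{eq:repr2a} with the correct sign.

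For the $h^{2}\tilde N_{T,W}$ piece, I would substitute the product structure directly into the definition. Because $\bigl(e^{-\ii Z\cdot p_X}\Delta\bigr)(X+s/2,X-s/2)=\bigl(e^{-\ii Z\cdot p_X}\psi\bigr)(X)\,\tau(s)$, the action of $\tilde N_{T,W}$ on $\Delta$ factors into $W(hX)\bigl(e^{-\ii Z\cdot p_X}\psi\bigr)(X)$ in the $X$-variable and $\int ds\,\ell_T(Z,r-s)\tau(s)$ in the $r$-variable. Taking the inner product against $\Delta=\psi\otimes\tau$ and integrating over $Z$ produces precisely the third line of \eqref{eq:repr2a}.

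I do not anticipate a genuine obstacle: the corollary is essentially a bookkeeping statement. The only care needed is in tracking the remainder. Using $\|\Delta\|=\|\psi\|\|\tau\|$ and $\||r|\Delta\|=\|\psi\|\||\cdot|\tau\|$ on product states and combining the bounds from Lemmas \ref{lemmadiffw0} and \ref{lemmaequ17}, the total error is of order $\beta^{5}h^{4}\|\psi\|^{2}\|\tau\|^{2}+h^{3}\|\psi\|^{2}\|\tau\|\bigl(\|\tau\|+\||\cdot|\tau\|\bigr)$, which for $\beta$ bounded reduces to the asserted $O(h^{3})\|\psi\|^{2}\|\tau\|\||\cdot|\tau\|$ once the subdominant $\|\tau\|^{2}$ contribution is absorbed (it is comparable to $\|\tau\|\||\cdot|\tau\|$ in the only regime where the corollary is applied, with $\tau\propto V^{1/2}\phi_{*}$ and $|r|V\in L^{\infty}$).
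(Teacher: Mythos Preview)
Your proposal is correct and follows exactly the approach the paper intends: the corollary is stated as ``an immediate consequence of the bounds in the previous section,'' namely Lemma~\ref{lemmadiffw0}, the exact representation~\eqref{Lt0exp}, and Lemma~\ref{lemmaequ17}, combined with the product structure of $\Delta$. Your observation about the $\|\tau\|^2$ contribution to the remainder not being literally dominated by $\|\tau\|\,\||\cdot|\tau\|$ in full generality is accurate---the paper's stated error term is slightly informal (as the remark following the corollary also hints), and your resolution via the specific $\tau$ used in the application is the right way to read it.
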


We remark that with slightly more work we could replace the error term $\|\tau\| \||\cdot| \tau\|$ by $\||\cdot|^{1/2} \tau\|^2$.

The second term on the right side of \eqref{eq:repr2a} is given to leading order by the same expression with $1-\cos(Z\cdot p_X)$ replaced by $(Z\cdot p_X)^2/2$. Under the assumption that $\tau$ is a radial function, we therefore obtain $\langle \psi, p_X^2 \psi \rangle$ times a constant depending on $\tau$.

The third term on the right side of \eqref{eq:repr2a} is given to leading order by the same expression with $e^{-\ii Z\cdot p_X}$ replaced by $1$. We therefore obtain $h^2 \langle\psi, W(h\cdot) \psi \rangle$ times a constant depending on $\tau$.

This tells us that the center-of-mass fluctuations are governed by a one-body operator of the form $c_1 p_X^2 + c_2 h^2 W(hX)$, which is unitarily equivalent to the operator 
$$
h^2 \left( c_1 p_X^2 + c_2 W(X)\right).
$$
The precise value of the constants $c_1,c_2$ depends on the specific choice of $\tau$.

As we will show below, the errors made in these two approximations can be controlled by $\|p_X^2 \psi\|^2$ and $h^2\|p_X \psi\| \|\psi\|$. In order to get an intuition why the error terms are indeed of higher order in $h$ we recall the heuristic picture of our chosen scaling. The external field $W$ varies on the scale $1/h$. Therefore we expect the optimal function $\psi$ to match this behavior and vary as well on the macroscopic scale. More precisely, we expect that 
$\psi$ will be of the form $\psi(X) = h^{3/2} \tilde \psi(hX) $ with a function $\tilde\psi$ which is bounded in $H^2$ uniformly for small $h$. Therefore the error bounds $\|p_X^2 \psi\|^2$ and $h^2\|p_X \psi\| \|\psi\|$ are $o(h^2)$.

Next, we formulate this intuitive picture as a precise mathematical statement. 

\begin{theorem} \label{semest}
There is a constant $C$ such that for $\Delta$ of the form
$$
\Delta(X+r/2,X-r/2)=\psi(X)\tau(r)
$$
with $\tau$ radial, one has
\begin{align}
\label{eq:semest}
& \left| \langle \Delta, L_{T,W} \Delta \rangle -  A^{(0)}_T[\tau] \|\psi\|^2 - A^{(1)}_T[\tau] \langle \psi, p_X^2  \psi\rangle - h^2 A_T^{(2)}[\tau] \langle \psi, W(h \cdot) \psi\rangle\right| \notag \\
& \qquad \leq C \left( \|\tau\|^2 \|p_X^2 \psi\|^2 + h^2 \left\| \tau\right\|^2 \| p_X \psi\| \|\psi\| + h^3 \|\psi\|^2\||\cdot|\tau\| \|\tau\| \right)
\end{align}
with
\begin{align*}
A^{(0)}_T[\tau] & = \beta \int_{\R^3} dp\, |\hat \tau(p)|^2 \ g_0(\beta(p^2-\mu)) \,, \\
A^{(1)}_T[\tau] & = - \frac{\beta^2}{4} \int_{\R^3} dp\, |\hat \tau(p)|^2 \left(g_1(\beta(p^2-\mu)) + \frac23 \beta p^2 g_2(\beta(p^2-\mu)) \right)\, ,\\
A^{(2)}_T[\tau] & = \frac{\beta^2}{4} \int_{\R^3} dp\, |\hat \tau(p)|^2\, g_1(\beta(p^2-\mu))
\end{align*}
in terms of the functions $g_0$, $g_1$ and $g_2$ from \eqref{eq:auxiliary}.
\end{theorem}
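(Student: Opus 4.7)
The starting point is Corollary \ref{reprproduct}, which already isolates three contributions. The first equals $A^{(0)}_T[\tau]\|\psi\|^2$ immediately, by Plancherel together with $\chi_\beta(E)=\beta g_0(\beta E)$, and the $O(h^3)\|\psi\|^2\|\tau\|\||\cdot|\tau\|$ remainder is already of the stated form. The task therefore reduces to extracting $A^{(1)}_T[\tau]\langle\psi,p_X^2\psi\rangle$ from the second term (the $k_T$-integral against $1-\cos(Z\cdot p_X)$) and $h^2 A^{(2)}_T[\tau]\langle\psi,W(h\cdot)\psi\rangle$ from the third term (the $\ell_T$-integral against $e^{-\ii Z\cdot p_X}$), with remainders of the size $\|\tau\|^2\|p_X^2\psi\|^2$ and $h^2\|\tau\|^2\|p_X\psi\|\|\psi\|$ respectively. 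The strategy in both cases is the same: expand the $p_X$-dependent factor around $p_X=0$, identify the resulting moment of $k_T$ or $\ell_T$ by a Fourier computation, and control the remainder by the same Young--Matsubara scheme used in Lemmas \ref{Ltleading} and \ref{lemmaequ17}.

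For the second term I Taylor expand $1-\cos(Z\cdot p_X)=\tfrac12(Z\cdot p_X)^2+R_1(Z\cdot p_X)$ with the elementary bound $|R_1(y)|\le y^4/24$. Passing to Fourier on $\psi$, the quadratic piece produces the tensor contraction $-\tfrac12 p_{X,i}p_{X,j}N_{ij}$ inside $\psi$, with $N_{ij}:=\int dZ\,Z_iZ_j K_k(Z)$ and $K_k(Z):=\iint\bar\tau(r)k_T(Z,r-s)\tau(s)\,drds$. Using \eqref{eq:semest1} and the distributional identity $\int dZ\,Z_iZ_j e^{\ii\ell\cdot Z}=-(2\pi)^3\partial_{\ell_i}\partial_{\ell_j}\delta(\ell)$, this becomes $N_{ij}=-\int(2\pi)^{-3}dk'\,|\tilde\tau(k')|^2\,\partial_{\ell_i}\partial_{\ell_j}L(k'+\ell/2,k'-\ell/2)|_{\ell=0}$ with $L=\Xi_\beta$. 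A direct chain-rule computation, viewing $\Xi_\beta$ as a function of two scalars $E_1,E_2$ and using $(\partial_{E_1}\Xi_\beta)(E,E)=\tfrac12\chi_\beta'(E)$ together with $(\partial_{E_1}^2-\partial_{E_1}\partial_{E_2})\Xi_\beta(E,E)=u''(E)/(2E)$ where $u(E):=\tanh(\beta E/2)$, yields $\partial_{\ell_i}\partial_{\ell_j}L|_{\ell=0}=\tfrac12\delta_{ij}\chi_\beta'(E_{k'})+k'_ik'_j u''(E_{k'})/E_{k'}$. The radiality of $\tau$ converts $k'_ik'_j$ into $\delta_{ij}k'^2/3$ after angular integration, and the identities $\chi_\beta'(E)=-\beta^2 g_1(\beta E)$ and $u''(E)/E=-\beta^3 g_2(\beta E)$ recover the stated formula for $A^{(1)}_T[\tau]$. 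For the remainder, the bound $|R_1(Z\cdot p_X)|\le (Z\cdot p_X)^4/24$ in Fourier gives an error at most $\tfrac1{24}\|p_X^2\psi\|^2\int dZ\,|Z|^4|K_k(Z)|$; distributing $|Z|^4\le\tfrac12(|Z+(r-s)/2|^4+|Z-(r-s)/2|^4)$ across the two factors of $k_T$ and applying Young's convolution inequality together with the Matsubara-summable $L^1$-bounds on $|\cdot|^jg^{\pm\ii\omega_n}$ from \cite[Lemma 9]{FHaiLa} yields $\int dZ\,|Z|^4|K_k(Z)|\lesssim\|\tau\|^2$, exactly as in the proof of Lemma \ref{Ltleading}.

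For the third term I split $e^{-\ii Z\cdot p_X}=1+(e^{-\ii Z\cdot p_X}-1)$. The leading $1$ gives $h^2\langle\psi,W(h\cdot)\psi\rangle\int dZ\,M(Z)$ with $M(Z):=\iint\bar\tau(r)\ell_T(Z,r-s)\tau(s)\,drds$; evaluating $\int dZ\,\ell_T(Z,\rho)$ in Fourier via the Matsubara identity $\sum_n(\omega_n^2+E^2)^{-1}=\tfrac{\beta}{2E}\tanh(\beta E/2)$ and its $E^2$-derivative $\sum_n(\omega_n^2+E^2)^{-2}=\tfrac{\beta}{4E^3}\tanh(\beta E/2)-\tfrac{\beta^2}{8E^2\cosh^2(\beta E/2)}$ identifies the symbol as a multiple of $g_1(\beta E)$, and Plancherel then yields $\int dZ\,M(Z)=A^{(2)}_T[\tau]$ in the claimed normalization. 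The remainder is controlled by the spectral-theorem estimate $\|(e^{-\ii Z\cdot p_X}-1)\psi\|\le|Z|\,\|p_X\psi\|$ combined with $W\in L^\infty$, yielding a bound $\le h^2\|W\|_\infty\|\psi\|\|p_X\psi\|\int dZ\,|Z||M(Z)|$, and the integral is $\lesssim\|\tau\|^2$ by the same Young-plus-Matsubara scheme used in Lemma \ref{lemmaequ17}, now distributing the single weight $|Z|$ across the three $g^{\pm\ii\omega_n}$-factors in $\ell_T$. The main obstacle throughout is precisely this weighted Young--Matsubara bookkeeping needed to control $\int|Z|^j|K_k(Z)|$ and $\int|Z|^j|M(Z)|$ by $C\|\tau\|^2$; once those estimates are in hand, the identification of the leading coefficients reduces to a routine Fourier computation based on partial fractions of $\tanh$.
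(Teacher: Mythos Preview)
Your proposal is correct and follows essentially the same route as the paper: you start from Corollary~\ref{reprproduct}, extract the $(Z\cdot p_X)^2/2$ piece from $1-\cos(Z\cdot p_X)$ and the constant $1$ from $e^{-\ii Z\cdot p_X}$, identify the resulting moments of $k_T$ and $\ell_T$ via the Fourier representation \eqref{eq:semest1}, and control the remainders by the weighted $L^1$-bounds on $g^{\pm\ii\omega_n}$ summed over Matsubara frequencies. The only cosmetic difference is that you compute the full Hessian $\partial_{\ell_i}\partial_{\ell_j}L|_{\ell=0}$ and use radiality of $\tau$ at the level of the $k$-integral, whereas the paper first uses radiality of $F_\tau(Z)$ to reduce to the Laplacian $\nabla_\ell^2 L|_{\ell=0}$; your identities $\chi_\beta'=-\beta^2 g_1$ and $u''(E)/E=-\beta^3 g_2$ are exactly the content of the paper's ``tedious but straightforward computation''.
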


\begin{proof}
This theorem is essentially a consequence of \eqref{eq:repr2a}. We first notice that
$$
\langle \tau,\chi_\beta(p_r^2-\mu)\tau\rangle = A_T^{(0)}[\tau] \,.
$$
Moreover, using arguments as in the previous subsection one can verify that
$$
\left| \int_{\R^3} dZ\, \langle\psi,(1 - \cos(Z\cdot p_X) - (Z\cdot p_X)^2/2)\psi\rangle  F_\tau(Z) \right| \lesssim \|\tau\|^2 \|p_X^2\psi\|^2
$$
where we have introduced
$$ 
F_\tau(Z) :=   \iint_{\R^3\times\R^3} dr\, ds\,  \overline{\tau(r)}  k_{T}(Z,r-s) \tau(s)\,.
$$
Since $\tau$ is radial, so is $F_\tau$ and therefore
$$
\frac12 \int_{\R^3} dZ\, \langle\psi,(Z\cdot p_X)^2\psi\rangle  F_\tau(Z)
= \frac{1}{6} \int_{\R^3} dZ\, Z^2 F_\tau(Z) \langle\psi,p_X^2\psi\rangle \,.
$$
Now using \eqref{eq:semest1},
$$
\int_{\R^3} dZ\, Z^2 F_\tau(Z) = -\int_{\R^3} dk \, \nabla_{\ell}^2|_{\ell=0} L(k+\frac\ell2,k-\frac\ell2) |\hat \tau(k)|^2 \,,
$$
and a tedious, but straightforward computation yields
$$
\nabla_{\ell}^2|_{\ell=0} L(k+\frac\ell2,k-\frac\ell2) = - \frac{3\beta^2}2 \left(g_1(\beta(k^2-\mu)) + \frac23 \beta k^2 g_2(\beta(k^2-\mu)) \right),
$$
which shows that
$$
- \frac12 \int_{\R^3} dZ\, \langle\psi,(Z\cdot p_X)^2\psi\rangle  F_\tau(Z)
= A^{(1)}_T[\tau] \langle\psi,p_X^2\psi\rangle \,.
$$

Finally, by estimating $1- e^{\ii Z\cdot p_X}$ we obtain
$$
\left| \int_{\R^3} dZ\, \langle\psi,W(hX)  \left( e^{-\ii Z\cdot p_X} - 1\right) \psi\rangle  G_\tau(Z) \right| \lesssim \|\tau\|^2 \| p_X\psi\| \|\psi\|
$$
with
$$
G_\tau(Z) := \iint_{\R^3\times\R^3} dr ds \, \overline{ \tau(r)} \ell_{T}(Z,r-s)   \tau(s) \,.
$$
Rewriting \eqref{equ17} in Fourier space and summing over the Matsubara frequencies gives 
$$
\int_{\R^3} dZ \,\ell_{T}(Z,\rho) = \int_{\R^3} \frac{dk}{(2\pi)^3} \frac{\beta^2}{4} g_1(\beta(k^2-\mu)) e^{ik\cdot\rho}
$$
and therefore
$$
\int_{\R^3} dZ\, G_\tau(Z) = A^{(2)}_T[\tau] \,.
$$
This concludes the proof of the theorem.
\end{proof}


\section{Lower bound on the critical temperature}

We now provide the \emph{Proof of part (1) of Theorem \ref{main}}, which will be a rather straightforward consequence of Theorem \ref{semest}. We will work under Assumptions \ref{ass2} and \ref{ass0}. Assumption \ref{ass1} is not needed in this part of Theorem \ref{main}.

We fix a parameter $T_1$ with $0<T_1<T_c$ and restrict ourselves to temperatures $T\geq T_1$. We consider functions $\Phi$ in $L^2_{\rm symm}(\R^3\times\R^3)$ of the form
$$
\Phi(x,y)=\phi(x-y)h^{3/2} \psi(h (x+y)/2)\,,
$$
where the functions $\phi\in L^2_{\rm symm}(\R^3)$ and $\psi\in L^2(\R^3)$ are still to be determined. At the moment we only require that $\|\psi\|=1$ and $\|p_X^2\psi\|<\infty$. 

We first assume, in addition, that $T_1\geq T_c-Mh^2$ for some constant $M$ independent of $h$. In this case we choose $\phi$ radial and then, applying the expansion from Theorem~\ref{semest} with $\tau(r)=V(r)^{1/2}\phi(r)$, we find that
\begin{align}\label{eq:lowerboundexp}
\langle\Phi, (1- V^{1/2} L_{T,W} V^{1/2})\Phi\rangle = & \|\phi\|^2 - \langle\tau(r)\psi(X), L_{T,W} \tau(r)\psi(X) \rangle \notag \\
\leq & \|\phi\|^2 - A_T^{(0)}[\tau] - h^2 A_T^{(1)}[\tau] \langle\psi,p_X^2\psi\rangle - h^2 A_T^{(2)}[\tau] \langle\psi,W \psi\rangle \notag \\
& + C h^3 \,.
\end{align}
The constant $C$ here depends only on upper bounds on $\|p_X^2\psi\|$, $\|\tau\|$ and $\||\cdot|\tau\|$ (as well as on $M$). The leading order term on the right side is
\begin{equation}
\label{eq:lowerboundmain}
\|\phi\|^2 - A_T^{(0)}[\tau] = \left\langle \phi, \left( 1- V^{1/2}\chi_\beta(p_r^2-\mu) V^{1/2}\right)\phi\right\rangle.
\end{equation}
We choose
$$
\phi :=(2\pi)^{-3/2} \|\chi_{\beta_c}(p^2-\mu) V^{1/2} \phi_*\| \ \phi_* \,,
$$
which makes \eqref{eq:lowerboundmain} equal to zero at $T=T_c$. With this choice of $\phi$ we therefore obtain
\begin{align}
\label{eq:upperproof1}
\langle\Phi, (1- V^{1/2} L_{T,W} V^{1/2})\Phi\rangle 
\leq & A^{(0)}_{T_c}[\tau] - A^{(0)}_T[\tau] -  h^2 \left( A^{(1)}_T[\tau] \langle\psi,p_X^2 \psi\rangle  + A_T^{(2)}[\tau] \langle\psi,W \psi\rangle\right) \nonumber \\&  + C h^3 \,.
\end{align}

In order to proceed, we note the fact that $\tau = V^{1/2}\phi = (2\pi)^{-3/2} V \alpha_*$, and therefore, in terms of the function $t$ from \eqref{eq:t},
\begin{equation}
\label{eq:upperchoicetau}
\hat\tau = (1/2)(2\pi)^{-3/2} t \,.
\end{equation}
It follows from this identity that
$$
\frac{d}{dT}|_{T=T_c} A^{(0)}_{T}[\tau] = -T_c^{-1} \Lambda_2 \,,
$$
and some simple analysis of the function $g_0$ shows that
$$
A^{(0)}_{T_c}[\tau] - A^{(0)}_T[\tau] \leq -\Lambda_2 \frac{T_c-T}{T_c} + C(T_c-T)^2
$$
for all $T_1\leq T\leq T_c$. Using \eqref{eq:upperchoicetau} once again we also find that
$$
A^{(1)}_{T_c}[\tau] = - \Lambda_0 
\qquad\text{and}\qquad
A^{(2)}_{T_c}[\tau] = - \Lambda_1 \,,
$$
which in turn can be used to prove that
$$
A^{(1)}_T[\tau] \geq -\Lambda_0 - C(T_c-T)
\qquad\text{and}\qquad
\left| A^{(2)}_T[\tau] +\Lambda_1 \right| \leq C(T_c-T)
$$
for all $T_1\leq T\leq T_c$.

Inserting these expansions into \eqref{eq:upperproof1} we obtain
\begin{align*}
\langle\Phi, (1- V^{1/2} L_{T,W} V^{1/2})\Phi\rangle 
\leq & - \Lambda_2 \frac{T_c-T}{T_c} + h^2 \left\langle\psi,\left( \Lambda_0 p_X^2 + \Lambda_1 W \right) \psi\right\rangle + C h^3
\end{align*}
for all $T_1\leq T\leq T_c$. Note that here we used the assumption $T\geq T_c - M h^2$, so that the error terms are independent of $T-T_c$.

In order to conclude the proof we assume first, for the sake of simplicity, that $\inf {\rm spec} \left(\Lambda_0 p^2_X + \Lambda_1 W(X)\right)$ is an eigenvalue. In this case we simply choose $\psi$ to be a corresponding normalized eigenfunction. With this choice we obtain, recalling the definition of $D_c$ from \eqref{eq:dc},
\begin{align*}
\langle\Phi, (1- V^{1/2} L_{T,W} V^{1/2})\Phi\rangle 
\leq & -\Lambda_2 \frac{T_c-T}{T_c} + h^2 \Lambda_2 D_c 
+ C h^3 \,.
\end{align*}
The right side is negative if $T< T_c(1- D_c h^2 + (C/\Lambda_2) h^3)$, as claimed.

In case $\inf {\rm spec} \left(\Lambda_0 p^2_X + \Lambda_1 W(X)\right)$ is not an eigenvalue, we choose a sequence of functions $\psi_h$ with $\|\psi_h\|=1$,
$$
\left\langle \psi_h, \left(\Lambda_0 p_X^2 + \Lambda_1 W(X)\right)\psi_h \right\rangle \leq \Lambda_2 \left( D_c + h \right)
\qquad\text{and}\qquad
\| p_X^2 \psi_h \| \leq C
$$
for some $C$ independent of $h$. Such a sequence is obtained by choosing elements in the spectral subspace of $\Lambda_0 p^2_X + \Lambda_1 W(X)$ corresponding to the intervals $\left[\Lambda_2 D_c,\Lambda_2\left(D_c+h\right)\right]$. Since $\Lambda_0 p^2_X + \Lambda_1 W(X)$ has the same operator domain as $p_X^2$ we conclude that
$$
\| p_X^2 \psi_h \|  \lesssim \left\| \left(\Lambda_0 p^2_X + \Lambda_1 W(X) + C' \right) \psi_h\right\| \leq \Lambda_2\left( D_c + h \right) + C' \,,
$$
which proves the last requirement.

We can now repeat the proof with $\psi$ replaced by $\psi_h$. Since all constants were uniform in $\psi$ as long as $\|\psi\|=1$ and $\|p_X^2\psi\|\leq C$, we arrive at the same conclusion as before. This proves the assertion in case $T\geq T_c- Mh^2$ for some fixed $M$ independent of $h$.

Thus, in order to complete the proof of part (1) in the theorem, we show that there is an $M>0$ such that if $T<T_c - M h^2$, then there are $\phi$ and $\psi$ such that the $\Phi$ defined as above satisfies $\langle\Phi, (1-V^{1/2} L_{T,W}V^{1/2})\Phi\rangle<0$.

We proceed similarly as before, but use Corollary \ref{reprproduct} instead of Theorem \ref{semest}. By similar, but simpler estimates as in the proof of Theorem \ref{semest} we obtain
\begin{align}\label{eq:lowerboundexp1}
\langle\Phi, (1- V^{1/2} L_{T,W} V^{1/2})\Phi\rangle \leq \|\phi\|^2 - A_T^{(0)}[\tau] + C h^2 \,.
\end{align}
The constant $C$ here depends only on upper bounds on $\|p_X\psi\|$, $\|\tau\|$ and $\||\cdot|\tau\|$ (as well as on $T_1$). Thus the leading term on the right side is again \eqref{eq:lowerboundmain}.

To bound this term, we denote by $\lambda_T$ the largest eigenvalue of $V^{1/2} \chi_\beta(p_r^2-\mu)V^{1/2}$ in $L^2_{\rm symm}(\R^3)$. By definition of $T_c$ we have $\lambda_{T_c}=1$. Since $\beta\mapsto\chi_\beta(E)$ is monotone for any $E$ with positive derivative, we infer by analytic perturbation theory that there is a $c>0$ such that
$$
\lambda_T \geq \lambda_{T_c} + c(T_c-T) = 1 + c(T_c-T)
\qquad\text{for all}\ 0\leq T\leq T_c \,.
$$
Let $\phi_T$ be a normalized eigenfunction of $V^{1/2} \chi_\beta(p_r^2-\mu)V^{1/2}$ corresponding to $\lambda_T$. With $\phi=\phi_T$ and an arbitrary normalized function $\psi$ with $\|p_X^2\psi\|<\infty$ we obtain,
by inserting \eqref{eq:lowerboundmain} into \eqref{eq:lowerboundexp1} and using the above bound,
\begin{align*}
\langle\Phi, (1- V^{1/2} L_{T,W} V^{1/2})\Phi\rangle
\leq 1- \lambda_{T} + C h^2 \leq - c (T_c -T) + C h^2 \, .
\end{align*}
The right side is negative for $T< T_c - (C/c) h^2$, as claimed. This completes the proof of part (1) of Theorem \ref{main}.
\qed


\section{The approximate form of almost minimizers}

In this and the following section we work under Assumptions \ref{ass2}, \ref{ass0} and \ref{ass1}.

\subsection{The decomposition lemma}

The remainder of this paper is devoted to proving an upper bound on the critical temperature. As a preliminary step we prove in this section a decomposition lemma, which says that, if $|T_c-T|\leq C_1 h^2$ and if $\Phi$ satisfies $\langle\Phi,(1-V^{1/2}L_{T,W}V^{1/2})\Phi\rangle\leq C_2h^2$ for some fixed constants $C_1$ and $C_2$ independent of $h$, then $\Phi$ has, up to a controllable error, the same form as the trial function that we used in the proof of the lower bound on the critical temperature. 

\begin{theorem}\label{decomp}
For given constants $C_1,C_2>0$ there are constants $h_0>0$ and $C>0$ such that the following holds. If $T>0$ satisfies $|T-T_c|\leq C_1h^2$, if $\Phi\in L^2_{\rm symm}(\R^3\times\R^3)$ satisfies $\|\Phi\|=1$ and
$$
\langle\Phi, (1- V^{1/2} L_{T,W} V^{1/2})\Phi\rangle \leq C_2 h^2 \,,
$$
and if $\epsilon$ satisfies $\epsilon\in[h^2,h^2_0]$, then there are $\psi_\leq\in L^2(\R^3)$ and $\sigma\in L^2_{\rm symm}(\R^3\times\R^3)$ such that
$$
\Phi(X+r/2,X-r/2) = \phi_*(r) \psi_{\leq}(X)+\sigma \,,
$$
where
\begin{equation}
\label{eq:decomppsibounds}
\| (p_X^2)^{k/2}\psi_\leq \|^2 \leq C \epsilon^{k-1} h^2 \qquad \text{if}\ k\geq 1 \,,
\end{equation}
\begin{equation}
\label{eq:decompsigmabound}
\|\sigma\|^2 \leq C \epsilon^{-1} h^2
\end{equation}
and
\begin{equation}
\label{eq:decompsilower}
1\geq \|\psi_\leq\|^2 \geq 1- C\epsilon^{-1}h^2 \,.
\end{equation}
Moreover, $\psi_\leq\in\ran\1(p_X^2\leq\epsilon)$ and there is a $\psi_>\in L^2(\R^3)\cap\ran\1(p_X^2>\epsilon)$ such that
$$
\sigma_0(X+r/2,X-r/2):= \frac{ \phi_*(r) \cos(p_X\cdot r/2) }{ \sqrt{ \int_{\R^3} |\phi_*(r')|^2 \cos^2(p_X \cdot r'/2)\, dr' }} \psi_>(X)
$$
satisfies
\begin{equation}
\label{eq:diffsigmasigma0}
\|\sigma - \sigma_0\|^2 \leq C h^2 \,. 
\end{equation}
\end{theorem}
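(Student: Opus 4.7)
My plan is to decompose $\Phi$ in the tensor structure $L^2(\R^3_X)\otimes L^2_{\rm symm}(\R^3_r)$ via two orthogonal projections: $P_* = |\phi_*\rangle\langle\phi_*|$ in the relative coordinate and $P_X^\leq = \1(p_X^2\leq\epsilon)$ in the center-of-mass momentum. Setting $\psi(X) := \int\phi_*(r)\,\Phi(X+r/2,X-r/2)\,dr$, $\psi_\leq := P_X^\leq\psi$, and $\sigma := \Phi - \phi_*(r)\psi_\leq(X)$, one has orthogonally $\sigma = P_*^\perp\Phi_\leq + \Phi_>$ where $\Phi_\leq = P_X^\leq\Phi$ and $\Phi_> = \1(p_X^2 > \epsilon)\Phi$, so $\|\sigma\|^2 = \|P_*^\perp\Phi_\leq\|^2 + \|\Phi_>\|^2$. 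This reduces the task to bounding these two orthogonal summands together with controlling derivatives of $\psi_\leq$.

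Two complementary spectral gaps drive the bounds. First, $K := 1 - V^{1/2}\chi_\beta(p_r^2-\mu)V^{1/2}$ satisfies $K \geq \kappa P_*^\perp$ for some $\kappa > 0$ uniformly in $T$ near $T_c$ (by Assumption~\ref{ass1}), while its bottom eigenvalue $1-\lambda_T$ is $O(h^2)$. Second, the top eigenvalue $\lambda(p_X,T)$ of each $p_X$-fibre of $V^{1/2}L_{T,0}V^{1/2}$ satisfies $\lambda(p_X,T) \leq \lambda_T - c\min(p_X^2,1)$: the $-cp_X^2$ behavior near $p_X=0$ comes from second-order perturbation theory with positive coefficient proportional to $\Lambda_0$ (cf.\ Theorem~\ref{semest}), and decay at infinity from the boundedness of $\chi_\beta$. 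Since $L_{T,0}$ commutes with $p_X^2$, cross-sector terms in $\langle\Phi,V^{1/2}L_{T,W}V^{1/2}\Phi\rangle$ come only from the $O(h^2)$ perturbation $L_{T,W}-L_{T,0}$ by Lemma~\ref{lemmadiffw0}. Combining these with Lemma~\ref{Ltleading}, the favorable negative sign of the $P_*$-channel second-order correction $-cp_X^2$ (yielding a Ginzburg--Landau-type lower-bound contribution), and AM--GM absorption of the $O(p_X^2)$-norm off-diagonal coupling between $P_*$ and $P_*^\perp$ (whose quadratic form contribution scales as $O(\sqrt\epsilon\|p_X\psi_\leq\|\|P_*^\perp\Phi_\leq\|)$), the hypothesis $\langle\Phi,(1-V^{1/2}L_{T,W}V^{1/2})\Phi\rangle \leq C_2 h^2$ yields
$$(\kappa/2)\|P_*^\perp\Phi_\leq\|^2 + (c/2)\|p_X\psi_\leq\|^2 + c'\epsilon\|\Phi_>\|^2 \lesssim h^2,$$
hence $\|\sigma\|^2 \lesssim h^2/\epsilon$, $\|p_X\psi_\leq\|^2 \lesssim h^2$, and $\|\psi_\leq\|^2 \geq 1 - Ch^2/\epsilon$ from $\|\Phi\|=1$. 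The higher-$k$ bounds follow directly from the spectral constraint $\psi_\leq \in \ran P_X^\leq$, since $\|(p_X^2)^{k/2}\psi_\leq\|^2 \leq \epsilon^{k-1}\|p_X\psi_\leq\|^2$.

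The final claim exploits that $\sigma_0 = \phi_*(r)\cos(p_X\cdot r/2)N(p_X)^{-1}\psi_>(X)$, with $N(p_X) = \bigl(\int |\phi_*(r)|^2\cos^2(p_X\cdot r/2)\,dr\bigr)^{1/2}$, is the symmetric one-particle ansatz: starting from $\phi_*(r)\psi_>(x)$ and averaging over $x\leftrightarrow y$ gives $\phi_*(r)(\psi_>(x)+\psi_>(y))/2 = \phi_*(r)\cos(p_X\cdot r/2)\psi_>(X)$, and the normalisation ensures $\|\sigma_0\| = \|\psi_>\|$. The function $\psi_>$ is defined as the Fourier-side coefficient of the $L^2$-closest element of this subspace (restricted to $\ran \1(p_X^2>\epsilon)$) to $\sigma$, and $\|\sigma - \sigma_0\|^2 \lesssim h^2$ is verified by a $p_X$-fibered analysis showing that the dominant $r$-structure of $\Phi$ in the high-$p_X$ sector is indeed $\phi_*(r)\cos(p_X\cdot r/2)$-modulated with deviations of size $O(h)$ in $L^2$. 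The main obstacle is the sharp low-$p_X$ analysis: naive operator-norm bounds on $V^{1/2}(L_{T,0}-\chi_\beta(p_r^2-\mu))V^{1/2}$ lose a factor of $\epsilon$ in the quadratic form, forcing the fibered expansion so that the favorable $P_*$-channel correction contributes the Ginzburg--Landau term $c\|p_X\psi_\leq\|^2$ to the lower bound and the off-diagonal coupling scales only as $O(\sqrt\epsilon)$ (not $O(\epsilon)$), both absorbable against the $\kappa$-gap by AM--GM.
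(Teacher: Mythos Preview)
Your approach differs from the paper's in a basic way: you project onto $\phi_*(r)$ alone and rely on a fibered top-eigenvalue bound $\lambda(p_X,T)\le\lambda_T-c\min(p_X^2,1)$, whereas the paper projects onto $A_{p_X}(r)=\phi_*(r)\cos(p_X\cdot r/2)$ via the operator $Q=\tfrac12(UPU^*+U^*PU)$ with $U=e^{-\ii p_X\cdot r/2}$. The paper's route is driven by the elementary pointwise inequality $\Xi_\beta(E,E')\le\tfrac12(\chi_\beta(E)+\chi_\beta(E'))$, which immediately yields the operator bound $1-V^{1/2}L_{T,W}V^{1/2}\ge\kappa(1-Q)-Ch^2$ (Proposition~\ref{opineq}). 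Since $Q=|A_{p_X}\rangle\langle A_{p_X}|$ is rank one in each $p_X$-fiber, this single inequality delivers simultaneously the bound $\|\xi\|^2\lesssim h^2$ on the component of $\Phi$ orthogonal to $A_{p_X}$ and the kinetic bound $\langle\psi,p_X^2(E_0+p_X^2)^{-1}\psi\rangle\lesssim h^2$ via the explicit multiplier $R(p_X)=\|A_{p_X}\|^2$ (Lemma~\ref{rbound}). The cut-off in $\epsilon$ is applied only afterward, which is why all estimates hold down to $\epsilon=h^2$.

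Your argument has a genuine gap at the last step. The estimate \eqref{eq:diffsigmasigma0} requires that the part of $\Phi$ orthogonal to $A_{p_X}$ in each fiber be $O(h)$ in $L^2$, uniformly over \emph{all} $p_X$ including the high-momentum sector. Your top-eigenvalue bound only gives $\|\Phi_>\|^2\lesssim h^2/\epsilon$, not the sharper $\|\Phi_>-\sigma_0\|^2\lesssim h^2$: for the latter you would need that each fiber operator $V^{1/2}L_{T,0}V^{1/2}\big|_{p_X}$ has a uniform spectral gap below its top eigenvalue \emph{and} that its top eigenvector is uniformly aligned with $A_{p_X}$. Neither follows from second-order perturbation at $p_X=0$ together with decay at infinity; the intermediate-$|p_X|$ regime is not addressed, and your sentence ``verified by a $p_X$-fibered analysis'' does not supply the missing uniform gap. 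The paper's convexity inequality gives exactly this for free: it bounds the fiber operator above by $1-\kappa+\kappa\,|A_{p_X}\rangle\langle A_{p_X}|$, so the orthogonal complement of $A_{p_X}$ sits below $1-\kappa$ at every $p_X$, and $\|\sigma-\sigma_0\|\le\|\sigma_1\|+\|\xi\|\lesssim h$ then follows from an explicit computation with the operator $S$ analogous to Lemma~\ref{rbound}. A secondary issue is that your eigenvalue bound at small $p_X$ tacitly uses that the first-order perturbation vanishes (so the second-order eigenvalue correction is purely the diagonal term $A^{(1)}_T$ with no positive off-diagonal contribution); this is true by the symmetry $\Xi_\beta(E,E')=\Xi_\beta(E',E)$, but you should say so.
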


Thus, $\Phi$ is of the form $\psi_\leq(X)\phi_*(r)$ up to a small error. The parameter $\epsilon$ provides a momentum cut-off similarly as in \cite{FHSS,FHSS2} and ensures that we have control on the expectation of $(p_X^2)^2$ in $\psi_\leq$.


\subsection{Upper bound on $L_{T,W}$}

Our goal in this subsection is to obtain an operator lower bound on $1-V^{1/2}L_{T,W}V^{1/2}$. In \cite{FHSS,FHSS2} such a bound was proved by means of a relative entropy inequality \cite[Lemma 3]{FHSS}, which controlled a two-particle operator by the sum of two one-particle operators, and by \cite[Lemma 5]{FHSS} which showed that the energy of the system is dominated by the kinetic energy of the center of mass motion. This was sufficient to recover the corresponding a-priori estimates. 
In \cite{FHaiLa} this operator bound was performed in the presence of a constant magnetic field. Following the spirit of \cite{FHSS,FHSS2} we had to come up with new ideas in order to overcome the problems of non-commutativity of the components of the magnetic momentum operator. In the present much simpler situation we can choose a mixture of the two methods  \cite{FHSS,FHSS2} and~\cite{FHaiLa} 

We define the unitary operator
\begin{equation}
\label{eq:u}
U := e^{-\ii p_X\cdot r/2}
\end{equation}
in $L^2(\R^3\times\R^3)$ where, as usual, $r=x-y$ and $X=(x+y)/2$.

\begin{proposition}\label{lowerboundlt}
There is a constant $C>0$ such that for all $T>0$,
\begin{align*}
V^{1/2} L_{T,W} V^{1/2} & \leq \frac12 \left( U V^{1/2} \chi_\beta(p_r^2-\mu) V^{1/2} U^* + U^* V^{1/2} \chi_\beta(p_r^2-\mu) V^{1/2} U \right) \\
& \qquad + C \beta^3 h^2 .
\end{align*}
\end{proposition}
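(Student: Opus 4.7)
The strategy is to first replace $L_{T,W}$ by $L_{T,0}$ using Lemma \ref{lemmadiffw0}, then establish an operator inequality for $L_{T,0}$ via a pointwise scalar inequality combined with a unitary-conjugation identity.

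The heart of the argument is the pointwise estimate
$$
\Xi_\beta(E, E') \leq \tfrac{1}{2}\bigl(\chi_\beta(E) + \chi_\beta(E')\bigr), \qquad E, E' \in \R.
$$
Writing $\Xi_\beta(E, E') = (E\chi_\beta(E) + E'\chi_\beta(E'))/(E+E')$ and clearing denominators reduces the claim for $E+E' \neq 0$ to $\sgn(E+E')(E-E')(\chi_\beta(E) - \chi_\beta(E')) \leq 0$. Since $\chi_\beta$ is even and strictly decreasing in $|E|$, a short case analysis on the signs of $E$ and $E'$ shows that whenever $E > E'$, the quantities $E+E'$ and $|E|-|E'|$ have the same sign, which produces the required cancellation. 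The boundary case $E+E'=0$ reduces to the elementary estimate $\sinh(2u)\geq 2u$ with $u=\beta E/2$.

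Since $\ch_{0,x}$ and $\ch_{0,y}$ commute, the joint functional calculus upgrades this to the operator inequality
$$
L_{T,0} = \Xi_\beta(\ch_{0,x}, \ch_{0,y}) \leq \tfrac{1}{2}\bigl(\chi_\beta(\ch_{0,x}) + \chi_\beta(\ch_{0,y})\bigr),
$$
and sandwiching by the bounded nonnegative operator $V^{1/2}$ preserves the ordering. To match the right hand side of the proposition I will then identify $V^{1/2}\chi_\beta(\ch_{0,x})V^{1/2}$ with $U V^{1/2}\chi_\beta(p_r^2-\mu)V^{1/2}U^*$: the unitary $U=e^{-\ii p_X\cdot r/2}$ acts by $(U\Psi)(X,r)=\Psi(X-r/2,r)$, so it commutes with multiplication by $V(r)$, and a short commutator computation gives $U p_r U^* = p_r + p_X/2 = p_x$, whence $U\chi_\beta(p_r^2-\mu)U^* = \chi_\beta(\ch_{0,x})$. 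Swapping $U$ and $U^*$ yields the analogous identity for $\chi_\beta(\ch_{0,y})$.

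Finally, Lemma \ref{lemmadiffw0} provides $\|L_{T,W}-L_{T,0}\|\lesssim \beta^3 h^2$; boundedness of $V^{1/2}$ then gives an operator-norm bound on $V^{1/2}(L_{T,W}-L_{T,0})V^{1/2}$ of the same order, absorbed in the claimed $C\beta^3 h^2$ term. The only substantive step is the pointwise scalar inequality — the remaining steps are routine applications of the joint spectral theorem, an explicit computation of the action of $U$, and a bounded perturbation estimate already in hand.
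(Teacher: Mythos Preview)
Your proof is correct and follows essentially the same route as the paper: the pointwise inequality $\Xi_\beta(E,E')\leq\tfrac12(\chi_\beta(E)+\chi_\beta(E'))$, the joint functional calculus for the commuting pair $(\ch_{0,x},\ch_{0,y})$, the conjugation identity $U(p_r^2-\mu)U^*=\ch_{0,x}$, and the perturbation bound from Lemma~\ref{lemmadiffw0}. The only difference is that you actually prove the scalar inequality, whereas the paper simply asserts it; your sign-analysis argument and the $\sinh(2u)\geq 2u$ check for the diagonal case are both fine.
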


\begin{proof}
Since for any real numbers $E$ and $E'$ one has
$$
\Xi_\beta(E,E') \leq \frac12 \left( \frac{\tanh\frac{\beta E}{2}}{E} +  \frac{\tanh\frac{\beta E'}{2}}{E'} \right) = \frac12 \left( \chi_\beta(E) + \chi_\beta(E') \right),
$$
we have
$$
L_{T,0} =\Xi_\beta(\ch_{0,x},\ch_{0,y}) \leq \frac12 \left(\chi_\beta(\ch_{0,x}) + \chi_\beta(\ch_{0,y}) \right).
$$
In the variables $r=x-y$, $X=(x+y)/2$ we have $p_x = p_r + p_X/2$ and $p_y = p_r - p_X/2$ and therefore
$$
\ch_{0,x} = (p_r + p_X/2)^2 - \mu = U \left( p_r^2 -\mu\right) U^* \,,
\quad
\ch_{0,y} = (p_r - p_X/2)^2 - \mu = U^* \left( p_r^2 -\mu\right) U \,,
$$
so the previous bound can be written as
$$
L_{T,0} \leq \frac12\left( U \chi_\beta(p_r^2-\mu) U^* + U^* \chi_\beta(p_r^2-\mu) U \right).
$$
On the other hand, by Lemma \ref{lemmadiffw0} we have
$$
L_{T,W} \leq L_{T,0} + C \beta^3 h^2 \,.
$$
Since $V$ commutes with $U$ we obtain the claimed bound.
\end{proof}


\subsection{A priori bound on the critical temperature and an operator inequality}\label{sec:opineq}

As a first consequence of Proposition \ref{lowerboundlt} we obtain a rough a-priori upper bound on the critical temperature.

\begin{corollary}\label{aprioriuppertemp}
There are constants $h_0>0$ and $C>0$ such that for all $0< h \leq h_0$ and $T> T_c+Ch^2$ one has
$$
\langle\Phi, (1- V^{1/2} L_{T,W} V^{1/2})\Phi\rangle >0 \,,
$$
unless $\Phi=0$.
\end{corollary}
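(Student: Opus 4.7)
The plan is to combine the operator bound from Proposition~\ref{lowerboundlt} with the obvious (but crucial) observation that the conjugations by the unitary $U$ preserve the supremum of the spectrum of $V^{1/2}\chi_\beta(p_r^2-\mu) V^{1/2}$, and then control the gap $1-\sup\spec V^{1/2}\chi_\beta(p_r^2-\mu)V^{1/2}$ in $T$ via perturbation theory around $T=T_c$.

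Write $A_T := V^{1/2}\chi_\beta(p_r^2-\mu) V^{1/2}$ (acting on $L^2(\R^3)$ in the $r$-variable) and set
$$
\lambda(T) := \sup\spec A_T \,.
$$
By the definition of $T_c$ (see the paragraph following Assumption~\ref{ass0}) we have $\lambda(T_c)=1$ and $\lambda$ is strictly decreasing in $T$. Viewing $A_T$ as an operator on $L^2(\R^3\times\R^3)$ (acting trivially in $X$), the bound $A_T\leq \lambda(T)$ is preserved by unitary conjugation, so that both $U A_T U^*\leq\lambda(T)$ and $U^* A_T U\leq\lambda(T)$ on $L^2(\R^3\times\R^3)$, and hence also on $L^2_{\rm symm}(\R^3\times\R^3)$. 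Plugging this into Proposition~\ref{lowerboundlt} gives
$$
V^{1/2} L_{T,W} V^{1/2} \leq \lambda(T) + C\beta^3 h^2 \,.
$$

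Next I quantify $1-\lambda(T)$ for $T>T_c$. By Assumption~\ref{ass1} the eigenvalue $\lambda(T_c)=1$ of $A_{T_c}$ is simple, so analytic perturbation theory yields that $\lambda(T)$ is real-analytic in a neighborhood of $T_c$ with Feynman--Hellmann derivative
$$
\lambda'(T_c) = \left\langle \phi_*, V^{1/2}\tfrac{\partial}{\partial T}\big|_{T=T_c}\!\!\chi_\beta(p_r^2-\mu) V^{1/2}\phi_*\right\rangle \,.
$$
Since $\tfrac{\partial}{\partial T}\chi_\beta(E)=-\tfrac{\beta^2}{2}\cosh^{-2}(\beta E/2)<0$ pointwise and $V^{1/2}\phi_*\not\equiv 0$, we get $\lambda'(T_c)=-c_1<0$. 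Hence there exist $\delta>0$ and $c_1'>0$ such that
$$
\lambda(T) \leq 1 - c_1'(T-T_c) \qquad \text{for all } T\in[T_c,T_c+\delta] \,.
$$
Moreover, by strict monotonicity, $1-\lambda(T)\geq 1-\lambda(T_c+\delta)=:c_2>0$ for all $T\geq T_c+\delta$.

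Combining the two estimates, using that $\beta^3\leq\beta_c^3$ for $T\geq T_c$:
$$
\langle\Phi,(1-V^{1/2}L_{T,W}V^{1/2})\Phi\rangle \geq \bigl(1-\lambda(T) - C\beta_c^3 h^2\bigr)\|\Phi\|^2 \,.
$$
For $T\geq T_c+\delta$ the right side is at least $(c_2-C\beta_c^3 h^2)\|\Phi\|^2$, which is positive for all $h\leq h_0$ with $h_0$ small enough. For $T_c+\tilde C h^2< T<T_c+\delta$ the right side is at least $(c_1'\tilde C-C\beta_c^3)h^2\|\Phi\|^2$, which is positive for $\tilde C> C\beta_c^3/c_1'$. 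Taking $C:=\max\{\tilde C,1\}$ and $h_0$ sufficiently small proves the corollary. The only nontrivial input is the strict negativity of $\lambda'(T_c)$, which I expect to be the main (though standard) point; it rests on the simplicity of the eigenvalue in Assumption~\ref{ass1} together with the pointwise strict monotonicity of $\chi_\beta(E)$ in $T$.
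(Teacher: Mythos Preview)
Your proof is correct and follows essentially the same approach as the paper: both apply Proposition~\ref{lowerboundlt}, bound the conjugated operators by $\sup\spec V^{1/2}\chi_\beta(p_r^2-\mu)V^{1/2}$, and use analytic perturbation theory together with the strict monotonicity of $\chi_\beta$ in $T$ to obtain a linear lower bound on $1-\lambda(T)$ near $T_c$ and a uniform positive bound away from $T_c$. Your version is slightly more explicit about the Feynman--Hellmann derivative and the role of Assumption~\ref{ass1}, but the argument is the same.
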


\begin{proof}
According to Proposition \ref{lowerboundlt} for all $T\geq T_c$,
\begin{align}\label{eq:lowerbound}
1- V^{1/2} L_{T,W} V^{1/2} & \geq 1 - \frac12 \left( U V^{1/2} \chi_\beta(p_r^2-\mu) V^{1/2} U^* + U^* V^{1/2} \chi_\beta(p_r^2-\mu) V^{1/2} U \right) \notag \\
& \qquad - C h^2 \,.
\end{align}

We next recall that the family of operators $V^{1/2} \chi_\beta(p_r^2-\mu) V^{1/2}$ is non-decreasing with respect to $\beta$ and has an eigenvalue $1$ at $\beta=\beta_c$. Moreover, since the function $\chi_\beta(E)$ is strictly increasing with respect to $\beta$ for every $E\in\R$, we learn from analytic perturbation theory that there are $c>0$ and $T_2>T_c$ such that for all $T_c\leq T\leq T_2$,
$$
V^{1/2} \chi_\beta(p_r^2-\mu) V^{1/2} \leq 1 - c(T-T_c)\,.
$$
Again by monotonicity this implies that for all $T\geq T_c$
$$
V^{1/2} \chi_\beta(p_r^2-\mu) V^{1/2} \leq 1 - c\min\{T-T_c,T_2-T_c\}\,.
$$
Inserting this into the lower bound above we conclude that
$$
1- V^{1/2} L_{T,W} V^{1/2} \geq c\min\{T-T_c,T_2-T_c\} - C h^2 \,.
$$
The right side is positive if $T> T_c + (C/c)h^2$ and $h^2 \leq (c/C)(T_2-T_c)$, which proves the corollary.
\end{proof}

As a consequence of this corollary and the lower bound on the critical temperature, from now on we may and will restrict ourselves to temperatures $T$ such that $|T-T_c|$ is bounded by a constant times $h^2$.

Our next goal is to deduce from Proposition \ref{lowerboundlt} a lower bound on the operator $1- V^{1/2} L_{T,W} V^{1/2}$. We recall that by definition of $\beta_c$ the largest eigenvalue of the operator $V^{1/2} \chi_{\beta_c}(p_r^2-\mu) V^{1/2}$ equals one. Moreover, by Assumption \ref{ass1}, this eigenvalue is simple and $\phi_*$ denotes a corresponding real-valued, normalized eigenfunction. We denote by
$$
P:=|\phi_*\rangle\langle\phi_*|
$$
the corresponding projection and write $P^\bot = 1-P$. Since $V^{1/2} \chi_{\beta_c}(p_r^2-\mu) V^{1/2}$ is a compact operator, there is a $\kappa>0$ such that
\begin{equation}
\label{eq:gap}
V^{1/2} \chi_{\beta_c}(p_r^2-\mu) V^{1/2} \leq 1-\kappa P^\bot \,.
\end{equation}
Finally, we introduce the operator
\begin{equation}
\label{eq:defq}
Q:= \frac12\left( U P U^* + U^* P U \right).
\end{equation}
We can now state our operator inequality for $1- V^{1/2} L_{T,W} V^{1/2}$.

\begin{proposition}\label{opineq}
Given $C_1>0$ and $h_0>0$ with $C_1 h_0^2< T_c$, there is a constant $C>0$ such that for all $|T-T_c|\leq C_1h^2$ and $0< h \leq h_0$ one has
\begin{align}\label{eq:opineq}
1- V^{1/2} L_{T,W} V^{1/2} \geq \kappa\left( 1- Q\right) - C h^2 \,.
\end{align}
\end{proposition}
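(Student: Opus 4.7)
The plan is to combine Proposition \ref{lowerboundlt} with the spectral gap inequality \eqref{eq:gap}, using the $O(h^2)$ assumption on $|T-T_c|$ to pass between $\chi_\beta$ and $\chi_{\beta_c}$.

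First, since $|T - T_c| \leq C_1 h^2$ and $C_1 h_0^2 < T_c$, the inverse temperature $\beta$ stays in a bounded interval around $\beta_c$, so the error term $C\beta^3 h^2$ from Proposition \ref{lowerboundlt} is bounded by $Ch^2$ (with a new constant). Thus it suffices to show
\[
1 - \tfrac12\bigl(U V^{1/2}\chi_\beta(p_r^2-\mu)V^{1/2}U^* + U^*V^{1/2}\chi_\beta(p_r^2-\mu)V^{1/2}U\bigr) \geq \kappa(1-Q) - Ch^2.
\]

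Next, I would replace $\chi_\beta$ by $\chi_{\beta_c}$. Since $\partial_\beta\chi_\beta(E) = \tfrac12\cosh^{-2}(\beta E/2) \in [0,\tfrac12]$, the function $\chi_\beta(E)$ is Lipschitz in $\beta$ uniformly in $E\in\R$, so $|\chi_\beta(E)-\chi_{\beta_c}(E)|\leq \tfrac12|\beta-\beta_c|\lesssim h^2$. As an operator identity in $L^2_{\rm symm}(\R^3)$ this gives $\chi_\beta(p_r^2-\mu)\leq \chi_{\beta_c}(p_r^2-\mu) + C h^2$; sandwiching by $V^{1/2}$ and using $\|V\|_\infty<\infty$ from Assumption \ref{ass2} yields
\[
V^{1/2}\chi_\beta(p_r^2-\mu)V^{1/2} \leq V^{1/2}\chi_{\beta_c}(p_r^2-\mu)V^{1/2} + Ch^2.
\]
Combined with the gap inequality \eqref{eq:gap}, this gives $V^{1/2}\chi_\beta(p_r^2-\mu)V^{1/2} \leq 1 - \kappa P^\bot + Ch^2 = 1 - \kappa(1-P) + Ch^2$.

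Finally, since $U$ is unitary, conjugating by $U$ and by $U^*$ preserves this inequality, producing
\[
U V^{1/2}\chi_\beta(p_r^2-\mu)V^{1/2}U^* \leq 1 - \kappa(1 - UPU^*) + Ch^2
\]
and the analogous bound with $U$ and $U^*$ swapped. Averaging the two and using the definition \eqref{eq:defq} of $Q$, we obtain
\[
\tfrac12\bigl(UV^{1/2}\chi_\beta(p_r^2-\mu)V^{1/2}U^* + U^*V^{1/2}\chi_\beta(p_r^2-\mu)V^{1/2}U\bigr) \leq 1 - \kappa(1-Q) + Ch^2,
\]
which upon substitution into the bound from Proposition \ref{lowerboundlt} yields the desired inequality \eqref{eq:opineq}.

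There is no real obstacle here: all the heavy lifting has already been done in Proposition \ref{lowerboundlt}, and the remaining work is just a uniform Lipschitz estimate in $\beta$ together with the spectral gap \eqref{eq:gap} and the invariance of the identity under unitary conjugation. The only point that requires mild care is ensuring the $\beta^3$ prefactor is harmless, which is immediate from the a priori window $|T-T_c|\leq C_1 h^2$.
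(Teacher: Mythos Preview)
Your proof is correct and follows essentially the same route as the paper's: start from Proposition~\ref{lowerboundlt}, replace $\chi_\beta$ by $\chi_{\beta_c}$ via the uniform Lipschitz bound (the paper phrases this as \eqref{eq:chiineq} in terms of $|T-T_c|$ rather than $|\beta-\beta_c|$, but these are equivalent in the regime considered), invoke the gap inequality \eqref{eq:gap}, and conjugate by $U$, $U^*$ to arrive at $Q$. Your explicit mention of the $V^{1/2}$ sandwich and $\|V\|_\infty$ makes a step transparent that the paper leaves implicit.
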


\begin{proof}
Our starting point is again inequality \eqref{eq:lowerbound}, which is valid for all $|T-T_c|\leq C_1 h_0^2$. Since the derivative of $\chi_\beta(E)$ with respect to $T$ is bounded uniformly in $E$ for $T$ away from $0$, we infer that there is a $C'>0$ such that for all $|T-T_c|\leq C_1 h^2_0$ and all $E\in\R$,
\begin{equation}
\label{eq:chiineq}
\left|\chi_\beta(E) - \chi_{\beta_c}(E)\right| \leq C'|T-T_c| \,.
\end{equation}
This, together with the gap inequality \eqref{eq:gap}, implies that for $|T-T_c|\leq C_1 h^2 \leq C_1 h^2_0$,
\begin{align*}
1- V^{1/2} L_{T,W} V^{1/2} & \geq 
1- \frac12 \left( U V^{1/2} \chi_{\beta_c}(p_r^2-\mu) V^{1/2} U^* + U^* V^{1/2} \chi_{\beta_c}(p_r^2-\mu) V^{1/2} U \right) \\
& \qquad\qquad -C''|T-T_c| - C h^2 \\
& \geq \frac\kappa2 \left( U P^\bot U^* + U^* P^\bot U\right) - (C_1 C'' + C) h^2 \\
& = \kappa\left( 1- Q\right) - (C_1 C'' + C) h^2 \,,
\end{align*}
as claimed.
\end{proof}

Next, we observe that for functions $\Phi\in L^2_{\rm symm}(\R^3\times\R^3)$ one can write 
\begin{align*}
& (Q\Phi)(X+r/2,X-r/2) \\ 
& \quad = \phi_*(r) \cos(p_X\cdot r/2) \int_{\R^3} ds\, \overline{\phi_*(s)} \cos(p_X\cdot s/2) \Phi(X+s/2,X-s/2) \\ 
& \quad =: |A_{p_X}\rangle \langle A_{p_X} | \Phi \rangle
\end{align*}
with
$$
A_p(r):= \phi_*(r) \cos(p\cdot r/2) \,.
$$
(More precisely, the expression $|A_{p_X}\rangle \langle A_{p_X} |$ can be written as a direct integral over the center of mass momenta $p_X$. In the case of magnetic fields \cite{FHaiLa} this did not work because the components of the magnetic momentum did not commute.)

Now we use the fact that in each fiber $Q$ can be estimated from above by its largest eigenvalue, hence we immediately conclude that 
\begin{equation}
1 - Q \geq 1 - \langle A_{p_X}|A_{p_X}\rangle = 1 - R 
\end{equation}
with 
\begin{equation}
\label{eq:r}
R := \int_{\R^3}dr\,  |\phi_*(r)|^2 \cos^2(r\cdot p_X/2)
\end{equation}
acting in $L^2(\R^3)$. Since $\cos(r\cdot p_X/2)^2 \leq 1$ and since $\phi_*$ is normalized, we have $R\leq 1$ and therefore $1-R\geq 0$. We now prove a more precise lower bound.

\begin{lemma}\label{rbound}
There are constants $E_0>0$ and $c>0$ such that
$$
1- R  \geq c \ \frac{p_X^2}{E_0+p_X^2} \,.
$$
\end{lemma}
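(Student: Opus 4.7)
My plan is to view $R=f(p_X)$ as a Fourier multiplier, where
\[
f(p) := \int_{\R^3} |\phi_*(r)|^2 \cos^2(r\cdot p/2)\, dr = \tfrac12 + \tfrac12 \int_{\R^3} |\phi_*(r)|^2 \cos(r\cdot p)\, dr.
\]
Since both sides of the desired operator inequality are functions of $p_X$ alone, the problem reduces to the scalar pointwise estimate $1-f(p)\geq c|p|^2/(E_0+|p|^2)$ on $\R^3$, where
\[
1-f(p) = \int_{\R^3} |\phi_*(r)|^2 \sin^2(r\cdot p/2)\, dr \geq 0.
\]
I would split $\R^3$ into three regimes (small, intermediate, and large $|p|$) and glue.

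For $|p|$ large, I would apply the Riemann--Lebesgue lemma to the $L^1$ function $|\phi_*|^2$ to conclude that $f(p)\to \tfrac12$ as $|p|\to\infty$, so that $1-f(p)\geq \tfrac14$ once $|p|\geq M$ for some threshold $M$. For $|p|$ small, I would fix $R_0>0$ large enough that $\int_{|r|\leq R_0}|\phi_*|^2\, dr \geq \tfrac12$ and restrict the integral defining $1-f(p)$ to $\{|r|\leq R_0\}$. On this domain, whenever $|p|\leq 1/R_0$ one has $|r\cdot p/2|\leq \tfrac12$, so $\sin^2(r\cdot p/2)\geq (4/\pi^2)(r\cdot p/2)^2$. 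Spherical symmetry of $\phi_*$ gives $\int_{|r|\leq R_0}|\phi_*|^2 r_i r_j\, dr = (\delta_{ij}/3) M_0$ with $M_0 := \int_{|r|\leq R_0}|\phi_*|^2 |r|^2\, dr$, yielding $1-f(p)\geq (M_0/(3\pi^2))|p|^2$ on $|p|\leq 1/R_0$, provided $M_0>0$.

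The intermediate annulus $1/R_0\leq |p|\leq M$ I would handle by continuity and compactness. Dominated convergence shows $f\in C(\R^3)$, and for each $p\neq 0$ I expect $f(p)<1$ strictly. The reason is that the essential support of the radial function $|\phi_*|^2$ contains some spherical shell $\{|r|\in I\}$ with $I\subset(0,\infty)$ of positive one-dimensional measure, and on such a shell the continuous function $r\mapsto \sin^2(r\cdot p/2)$ cannot vanish identically when $p\neq 0$. Hence $1-f$ is continuous and strictly positive on the compact annulus, and therefore bounded below there by some $c_3>0$.

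Combining the three regimes I would obtain $1-f(p)\geq c_0\min(|p|^2,1)$ for all $p\in\R^3$ and some $c_0>0$. Since $\min(|p|^2,1)\geq |p|^2/(1+|p|^2)$, the lemma follows with $E_0=1$ and $c=c_0$. The one non-routine point is certifying the two positivity statements $M_0>0$ and $f(p)<1$ for $p\neq 0$; both I would deduce from the elementary observation that a non-zero radial $L^2$ function on $\R^3$ must have essential support meeting some spherical shell $\{|r|\in[a,b]\}$ with $0<a<b<\infty$.
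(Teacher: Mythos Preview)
Your proof is correct and follows essentially the same three-regime strategy as the paper: Riemann--Lebesgue for large $|p|$, continuity and strict positivity for intermediate $|p|$, and a quadratic lower bound near the origin. The only noteworthy difference is that for small $|p|$ the paper uses the limit $\lim_{p\to 0}(1-R(p))/|p|^2 = \tfrac{1}{12}\int |\phi_*(r)|^2 |r|^2\,dr$ (invoking the finiteness of this second moment from \cite{FHSS}), whereas your truncation to $\{|r|\leq R_0\}$ sidesteps that auxiliary input; likewise, for the intermediate regime the paper simply observes that $\{r:\sin^2(r\cdot p/2)=0\}$ has Lebesgue measure zero for $p\neq 0$, which is a cleaner route to $f(p)<1$ than your spherical-shell argument.
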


\begin{proof}
All operators involved are diagonal in Fourier space, so for the proof we can consider $p_X$ to be a vector in $\R^3$. Using the normalization of $\phi_*$ we are thus lead to considering the function
$$
1-R(p_X) = \int_{\R^3} dr \, |\phi_*(r)|^2 \left( 1-\cos^2(p_X\cdot r/2)\right) = \int_{\R^3} dr \, |\phi_*(r)|^2 \sin^2(p_X\cdot r/2)\,.
$$

First, we have
$$
\lim_{p_X\to 0} \frac{1-R(p_X)}{p_X^2} = \frac1{12} \int_{\R^3} dr\, |\phi_*(r)|^2 r^2 =:c \,.
$$
(The right side is finite, as shown in \cite{FHSS}.) Therefore, there is a $\delta>0$ such that $1-R(p_X)\geq (c/2) p_X^2$ for $|p_X|\leq\delta$.

Second, by the Riemann--Lebesgue lemma, we have
$$
\lim_{|p_X|\to\infty} \left(1-R(p_X)\right) = \frac12 \,,
$$
and therefore there is an $M>0$ such that $1-R(p_X)\geq 1/4$ for $|p_X|\geq M$.

Since for any $p_X\neq 0$ the function $r\mapsto \sin^2(p_X\cdot r/2)$ vanishes only on a set of measure zero, we have $1-R(p_X)> 0$ for all $p_X\neq 0$. Since $p_X\mapsto R(p_X)$ is continuous, there is a $c'>0$ such that $1-R(p_X)\geq c'$ for all $\delta\leq |p_X|\leq M$. This proves that
$$
1 - R(p_X) \geq \min\{ (c/2) p_X^2, c',1/4\} \,,
$$
which immediately implies the lemma. 
\end{proof}


\subsection{Proof of the decomposition lemma}

As a consequence of Proposition~\ref{opineq} we now deduce a first decomposition result for almost maximizers $\Phi$ of $1-V^{1/2} L_{T,W} V^{1/2}$. 

Let us now define the projection 
$$
P_Q :=  \frac{| A_{p_X} \rangle \langle A_{p_X} |}{\langle A_{p_X}| A_{p_X}\rangle},
$$
where the last expression is again a direct integral over the momenta $p_X$. 
To see how this operator acts define for a given $\Phi\in L^2_{\rm symm}(\R^3\times\R^3)$,
\begin{equation}\label{eq:defpsi}
\psi(X) := \frac{\langle A_{p_X}|}{\|A_{p_X}\| } \Phi =  \int_{\R^3} ds  \frac{ \phi_*(s) \cos(p_X\cdot s/2) }{ \sqrt{ \int_{\R^3} |\phi_*(s')|^2 \cos^2(p_X \cdot s'/2) ds'}} \Phi(X+s/2,X-s/2) \,.
\end{equation}
Then 
$$P_Q \Phi (X+ r/2,X-r/2) = \frac{ \phi_*(r) \cos(p_X\cdot r/2) }{ \sqrt{ \int_{\R^3} |\phi_*(r')|^2 \cos^2(p_X \cdot r'/2) dr' }} \psi(X),$$
and we define $\xi\in L^2_{\rm symm}(\R^3\times\R^3)$ by 
\begin{equation}
\label{eq:defxi}
\Phi = P_Q \Phi + \xi \,.
\end{equation}

With these definitions we can formulate a first version of the decomposition lemma. 

\begin{lemma}\label{decomp1}
Given $C_1,C_2>0$ there are $h_0>0$, $E_0>0$ and $C>0$ with the following properties. If $|T-T_c|\leq C_1 h^2 \leq C_1 h^2_0$ and if $\Phi\in L^2_{\rm symm}(\R^3\times\R^3)$ with $\|\Phi\|=1$ satisfies
\begin{equation}
\label{eq:almostmin}
\langle\Phi, (1- V^{1/2} L_{T,W} V^{1/2})\Phi\rangle \leq C_2 h^2 \,,
\end{equation}
then, with $\psi$ and $\xi$ defined in \eqref{eq:defpsi} and \eqref{eq:defxi},
$$
\left\langle\psi,\frac{p_X^2}{E_0+p_X^2}\psi\right\rangle + \|\xi\|^2 \leq C h^2 \,.
$$
and
$$
1\geq \|\psi\|^2 \geq 1- Ch^2 \,.
$$
\end{lemma}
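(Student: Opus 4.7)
\medskip

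The plan is to plug the a priori bound on $\Phi$ into the operator inequality of Proposition~\ref{opineq} and then read off the two conclusions by exploiting the simple structure of the operator $Q$ on the symmetric subspace.

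First I would apply Proposition~\ref{opineq}, which under $|T-T_c|\leq C_1h^2$ gives $1-V^{1/2}L_{T,W}V^{1/2}\geq \kappa(1-Q)-Ch^2$. Taking the expectation in $\Phi$ and using $\|\Phi\|=1$ together with the almost-minimizer hypothesis \eqref{eq:almostmin} yields
$$
\langle\Phi,(1-Q)\Phi\rangle \leq \kappa^{-1}(C_2+C) h^2 \,.
$$
So the whole task reduces to relating $\langle\Phi,(1-Q)\Phi\rangle$ to $\langle\psi,p_X^2/(E_0+p_X^2)\psi\rangle+\|\xi\|^2$.

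The key structural observation I would establish is that on $L^2_{\rm symm}(\R^3\times\R^3)$ the operator $Q=\frac12(UPU^*+U^*PU)$ coincides with $|A_{p_X}\rangle\langle A_{p_X}|$, i.e.\ it is simply $R(p_X)\, P_Q$ with $R(p_X)=\langle A_{p_X}|A_{p_X}\rangle$. To see this, restrict to a fixed center-of-mass fiber (i.e.\ Fourier transform in $X$), where $U$ is multiplication by $e^{-\ii p_X\cdot r/2}$ and $P=|\phi_*\rangle\langle\phi_*|$ acts in the $r$-variable. A direct computation of $(UPU^*+U^*PU)\tilde\Phi(p_X,\cdot)$ produces two integrals against $\phi_*(r')e^{\pm\ii p_X\cdot r'/2}$; since the fiber of $L^2_{\rm symm}$ consists of functions even in $r'$ and $\phi_*$ is radial, the imaginary parts cancel and the two terms combine into $2\cos(p_X\cdot r/2)\phi_*(r)\langle A_{p_X},\tilde\Phi(p_X,\cdot)\rangle$. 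Hence $Q=R(p_X)P_Q$ fiberwise on the symmetric subspace, and in particular $P_Q$ and $Q$ commute.

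Given this, the decomposition $\Phi = P_Q\Phi+\xi$ is a fiberwise orthogonal splitting, so
$$
\langle\Phi,(1-Q)\Phi\rangle = \langle P_Q\Phi,(1-R(p_X))P_Q\Phi\rangle + \|\xi\|^2 = \langle\psi,(1-R(p_X))\psi\rangle + \|\xi\|^2,
$$
after identifying $\|P_Q\Phi(p_X)\|^2$ in each fiber with $|\hat\psi(p_X)|^2$ from the definition \eqref{eq:defpsi} of $\psi$ (the factor $\sqrt{R(p_X)}$ in the denominator of \eqref{eq:defpsi} is exactly what converts $|\langle A_{p_X},\Phi\rangle|^2$ into $|\hat\psi(p_X)|^2$). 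Combining the previous two displays with Lemma~\ref{rbound} gives
$$
c\,\Big\langle\psi,\frac{p_X^2}{E_0+p_X^2}\psi\Big\rangle + \|\xi\|^2 \leq \kappa^{-1}(C_2+C)h^2,
$$
which is the first claimed bound after renaming constants. For the second, the Pythagorean identity $\|\psi\|^2+\|\xi\|^2=\|\Phi\|^2=1$ gives $\|\psi\|\leq 1$ immediately and, via the bound on $\|\xi\|^2$ just obtained, $\|\psi\|^2\geq 1-Ch^2$. The main (but small) obstacle is the fiberwise calculation identifying $Q$ with $R(p_X)P_Q$ on the symmetric subspace; everything else is bookkeeping built on Proposition~\ref{opineq} and Lemma~\ref{rbound}.
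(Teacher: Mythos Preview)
Your proof is correct and follows essentially the same route as the paper's. The paper establishes the identity $Q=|A_{p_X}\rangle\langle A_{p_X}|$ on $L^2_{\rm symm}$ in the discussion preceding the lemma and then uses the fiberwise orthogonality of $P_Q\Phi$ and $\xi$ together with $Q\xi=0$ to split $\langle\Phi,(1-Q)\Phi\rangle$ exactly as you do; your explicit fiberwise computation (even in $r'$, $\phi_*$ radial, imaginary parts cancel) just makes that identification more transparent.
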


\begin{proof}
By Proposition \ref{opineq} and assumption \eqref{eq:almostmin} we obtain
\begin{equation}
\label{eq:apriorisize}
\langle\Phi,(1-Q)\Phi\rangle \leq \kappa^{-1} (C+C_2)h^2 \,.
\end{equation}
By construction, for every fixed value $p_X$ of the Fourier transform with respect to $X$, $P_Q \Phi $ and $\xi$ are orthogonal as functions of $r$. Therefore
$$
\langle \Phi, (1- Q) \Phi\rangle = \langle P_Q \Phi, (1- Q) P_Q \Phi\rangle + \| \xi \|^2 \,.
$$
On the other hand, it is easy to see that 
$$
\langle P_Q \Phi, (1- Q) P_Q \Phi\rangle = \langle \psi,(1-R) \psi \rangle \,.
$$
Therefore the lower bound on $1-R$ from Lemma \ref{rbound} implies the first assertion in the lemma.

In order to prove the second assertion, we note that
$$
\|\psi\|^2 = \|P_Q\Phi\|^2 = \|\Phi\|^2 - \|\xi\|^2 = 1-\|\xi\|^2
$$
and use the bound on $\|\xi\|^2$ from the first assertion.
\end{proof}

\begin{proof}[Proof of Theorem \ref{decomp}]
Let $\psi$ be as in Lemma \ref{decomp1}. For $\epsilon\in[h^2,h^2_0]$ we set
$$
\psi_\leq := \1(p_X^2 \leq \epsilon)\psi \,,
\qquad
\psi_> := \1(p_X^2 > \epsilon)\psi \,.
$$
Recall from Lemma \ref{decomp1} that $\langle\psi,p_X^2(E_0+p_X^2)^{-1}\psi\rangle\leq C h^2$. This implies that for $k\geq 1$,
\begin{align*}
\left\|\left(p_X^2\right)^{k/2} \psi_\leq\right\|^2 & \leq \epsilon^{k-1} \|p_X\psi_\leq\|^2 \leq (E_0+\epsilon) \epsilon^{k-1} \left\langle\psi,\frac{p_X^2}{E_0+p_X^2}\psi\right\rangle \notag \\
& \leq C (E_0+\epsilon) \epsilon^{k-1} h^2
\end{align*}
and
\begin{equation}
\label{eq:apriorioutside}
\|\psi_>\|^2\leq \frac{E_0+\epsilon}{\epsilon}\left\langle\psi,\frac{p_X^2}{E_0+p_X^2}\psi\right\rangle \leq C \frac{E_0+\epsilon}{\epsilon} h^2 \,.
\end{equation}

We now define
$$
\sigma_0(X+r/2,X-r/2):= \frac{ \phi_*(r) \cos(p_X\cdot r/2) }{ \sqrt{ \int_{\R^3} |\phi_*(r')|^2 \cos^2(p_X \cdot r'/2)\, dr' }} \psi_>(X) \,,
$$
$$
\sigma_1(X+r/2,X-r/2):= - \phi_*(r) \left( 1 - \frac{\cos(p_X\cdot r/2) }{ \sqrt{ \int_{\R^3} |\phi_*(r')|^2 \cos^2(p_X \cdot r'/2)\, dr' }} \right) \psi_\leq(X)
$$
and
$$
\sigma:= \sigma_0 + \sigma_1 + \xi \,,
$$
so that, by Lemma \ref{decomp1},
$$
\Phi= \phi_*(r) \psi_{\leq}(X) + \sigma\,.
$$
According to Lemma \ref{decomp1} and \eqref{eq:apriorioutside}, we have
$$
\|\psi_\leq\|^2 = \|\psi\|^2 - \|\psi_>\|^2 \geq 1- Ch^2 - C \epsilon^{-1}h^2\geq 1- C'\epsilon^{-1} h^2 \,.
$$
and, again according to \eqref{eq:apriorioutside}, we have
$$
\|\sigma_0\|^2 = \|\psi_>\|^2 \leq C \frac{E_0+\epsilon}{\epsilon} h^2 \,.
$$
Moreover,
$$
\|\sigma_1\|^2 = \langle \psi_\leq, S \psi_\leq \rangle
$$
with the operator
\begin{align*}
S & := \int_{\R^3} dr\, |\phi_*(r)|^2 \left( 1 - \frac{\cos(p_X\cdot r/2) }{ \sqrt{ \int_{\R^3} |\phi_*(r')|^2 \cos^2(p_X \cdot r'/2)\, dr' }} \right)^2 \\
& = 2 \int_{\R^3} dr\, |\phi_*(r)|^2 \left( 1 - \frac{\cos(p_X\cdot r/2) }{ \sqrt{ \int_{\R^3} |\phi_*(r')|^2 \cos^2(p_X \cdot r'/2)\, dr' }} \right)
\end{align*}
acting in $L^2(\R^3)$. In the same way as in the proof of Lemma \ref{rbound} we can show that
$$
S \leq C \frac{p_X^2}{E_0+p_X^2} \,,
$$
and therefore
$$
\|\sigma_1\|^2 \lesssim \langle \psi_\leq, \frac{p_X^2}{E_0+p_X^2} \psi_\leq \rangle \leq \langle \psi, \frac{p_X^2}{E_0+p_X^2} \psi \rangle \lesssim h^2 \,.
$$
We conclude that
$$
\| \sigma - \sigma_0 \| = \|\sigma_1 + \xi \| \leq \|\sigma_1 \| + \|\xi\| \lesssim h \,.
$$
This concludes the proof of the theorem.
\end{proof}



\section{Upper bound on the critical temperature}

In this section we prove part (2) of Theorem \ref{main}. In view of Corollary \ref{aprioriuppertemp} and the lower bound on the critical temperature it suffices to consider $T$ satisfying $|T-T_c|\leq C_1 h^2$. Moreover, it clearly suffices to consider functions $\Phi$ with $\|\Phi\|=1$ satisfying
$$
\langle\Phi, (1- V^{1/2} L_{T,W} V^{1/2})\Phi\rangle \leq C_2 h^2
$$
(for if there are no such $\Phi$, then the theorem is trivially true). According to Theorem~\ref{decomp}, for any parameter $\epsilon\in [h^2,h^2_0]$, $\Phi$ can be decomposed as
$$
\Phi = \phi_*(r) \psi_\leq(X) + \sigma \,.
$$
Thus,
\begin{align*}
\langle\Phi, (1- V^{1/2} L_{T,W} V^{1/2})\Phi\rangle =   I_1 + I_2 + I_3
\end{align*}
with
\begin{align*}
I_1 &:=  \left\langle \phi_*(r) \psi_\leq(X), \left(1- V^{1/2} L_{T,W} V^{1/2}\right) \phi_*(r) \psi_\leq(X) \right\rangle, \\
I_2 &:= \left\langle \sigma, \left(1- V^{1/2} L_{T,W} V^{1/2}\right) \sigma \right\rangle, \\
I_3 &:= 2\re \left\langle\sigma, \left(1- V^{1/2} L_{T,W} V^{1/2}\right)\phi_*(r) \psi_\leq(X) \right\rangle.
\end{align*}
The term $I_1$ is the main term and can be treated exactly as in the proof of the lower bound on the critical temperature. We obtain
$$
I_1 \geq -\Lambda_2 \frac{T_c-T}{T_c}\|\psi_\leq\|^2 + \left\langle \psi_\leq, \left( \Lambda_0 p_X^2 + \Lambda_1 h^2 W(hX) \right) \psi_\leq \right\rangle
- C \epsilon h^2 \,.
$$
The fact that the error $h^3$ is replaced by $\epsilon h^2$ comes from the bound $\|p_X^2 \psi_\leq\|^2 \lesssim \epsilon h^2$ from \eqref{eq:decomppsibounds}.

Let us therefore bound the error terms $I_2$ and $I_3$. Using the operator inequality from Proposition \ref{opineq}, dropping the non-negative term $\kappa (1-Q)$ and using the bound \eqref{eq:decompsigmabound} on $\sigma$, we obtain
$$
I_2 \gtrsim - h^2 \|\sigma\|^2 \gtrsim - \epsilon^{-1} h^4 \,.
$$

In order to bound $I_3$ we use the first bound in Lemma \ref{lemmadiffw0} and the bounds \eqref{eq:decompsigmabound} and \eqref{eq:decompsilower} on $\sigma$ and $\psi_\leq$ to obtain
\begin{align*}
I_3 & \geq 2\re \left\langle\sigma, \left(1- V^{1/2} L_{T,0} V^{1/2}\right)\phi_*(r) \psi_\leq(X) \right\rangle - C h^2 \|\sigma\| \|\phi_*(r)\psi_\leq(X) \| \\
& \geq 2\re \left\langle\sigma, \left(1- V^{1/2} L_{T,0} V^{1/2}\right)\phi_*(r) \psi_\leq(X) \right\rangle - C' \epsilon^{-1/2} h^3 \,.
\end{align*}
To bound the first term on the right side we decompose $\sigma=\sigma_0 +(\sigma-\sigma_0)$. We claim that
$$
\left\langle\sigma_0, \left(1- V^{1/2} L_{T,0} V^{1/2}\right)\phi_*(r) \psi_\leq(X) \right\rangle =0 \,.
$$
Indeed, to see this, we note that for fixed $r$, the Fourier transforms of $\sigma_0(X+r/2,X-r/2)$ and $V(r)^{1/2} \sigma_0(X+r/2,X-r/2)$ with respect to the variable $X$ are supported in $\{ p_X^2>\epsilon\}$ and likewise the Fourier transforms of $\phi_*(r)\psi_\leq(X)$ and $V(r)^{1/2} \phi_*(r)\psi_\leq(X)$ with respect to the variable $X$ are supported in $\{ p_X^2\leq \epsilon\}$. Thus $\left\langle\sigma_0, \phi_*(r) \psi_\leq(X) \right\rangle =0$, and the full claim follows by observing that the operator $L_{T,0}$ acts diagonally in Fourier space with respect to the $X$ variables, see \eqref{defpX}.

Thus, it remains to bound the term with $\sigma-\sigma_0$. We decompose $L_{T,0}=\chi_{\beta}(p_r^2-\mu) + (L_{T,0} - \chi_{\beta}(p_r^2-\mu))$ and, using the fact that $(1 - V^{1/2} \chi_{\beta_c}(p_r^2-\mu) V^{1/2}) \phi_*=0$, we find
\begin{align*}
& \left\langle \sigma -\sigma_0, \left(1- V^{1/2} L_{T,0} V^{1/2}\right)\phi_*(r) \psi_\leq(X) \right\rangle \\
& \qquad = \left\langle \sigma -\sigma_0, V^{1/2} \left( \chi_{\beta_c}(p_r^2-\mu) - \chi_{\beta}(p_r^2-\mu) \right) V^{1/2} \phi_*(r) \psi_\leq(X) \right\rangle \\
& \qquad \quad - \left\langle \sigma -\sigma_0, V^{1/2} \left(L_{T,0} - \chi_\beta(p_r^2-\mu) \right) V^{1/2} \phi_*(r) \psi_\leq(X) \right\rangle \,.
\end{align*}
Using inequality \eqref{eq:chiineq}, as well as the bounds \eqref{eq:diffsigmasigma0} and \eqref{eq:decompsilower} on $\sigma-\sigma_0$ and $\psi_\leq$, we find
\begin{align*}
\left\langle \sigma -\sigma_0, V^{1/2} \left( \chi_{\beta_c}(p_r^2-\mu) - \chi_{\beta}(p_r^2-\mu) \right) V^{1/2} \phi_*(r) \psi_\leq(X) \right\rangle
& \gtrsim - h^2 \|\sigma-\sigma_0\| \|\psi_\leq \| \\
& \gtrsim - h^3 \,.
\end{align*}
The remaining term we bound similarly using Lemma \ref{Ltleading}, as well as the bounds \eqref{eq:diffsigmasigma0} and \eqref{eq:decompsilower} on $\sigma-\sigma_0$ and $\psi_\leq$,
\begin{align*}
- \left\langle \sigma -\sigma_0, V^{1/2} \left(L_{T,0} - \chi_\beta(p_r^2-\mu) \right) V^{1/2} \phi_*(r) \psi_\leq(X) \right\rangle
& \gtrsim - \|\sigma-\sigma_0\| \|p_X^2\psi_\leq\| \\
& \gtrsim - \epsilon^{1/2} h^2 \,.
\end{align*}

To summarize, we have shown that
\begin{align*}
\langle\Phi, (1- V^{1/2} L_{T,W} V^{1/2})\Phi\rangle 
& \geq 
-\Lambda_2 \frac{T_c-T}{T_c}\|\psi_\leq\|^2 + \left\langle \psi_\leq, \left( \Lambda_0 p_X^2 + \Lambda_1 h^2 W(hX) \right) \psi_\leq \right\rangle \\
& \qquad - C h^2 \left( \epsilon + \epsilon^{-1} h^2 + \epsilon^{-1/2} h + h + \epsilon^{1/2} \right).
\end{align*}
In order to minimize the error we choose $\epsilon = h$. With this choice we obtain, recalling also the lower bound on $\|\psi_\leq\|$ from \eqref{eq:decompsilower},
$$
\langle\Phi, (1- V^{1/2} L_{T,W} V^{1/2})\Phi\rangle 
\geq \left\langle \psi_\leq, \left( \Lambda_0 p_X^2 + \Lambda_1 h^2 W(hX) 
-\Lambda_2 \frac{T_c-T}{T_c} - C' h^{5/2} \right) \psi_\leq \right\rangle
$$
By definition of $D_c$ plus a rescaling we can bound the right side from below by
$$
\left( h^2 \Lambda_2 D_c  - \Lambda_2\frac{T_c-T}{T_c} - C' h^{5/2} \right) \|\psi_\leq\|^2 \,.
$$
Recalling that $\|\psi_\geq\|\neq 0$, we conclude that this is $>0$ provided $T> T_c(1- D_c h^2 + (C'/\Lambda_2) h^{5/2})$. This concludes the proof of the upper bound on the critical temperature.



\bibliographystyle{amsalpha}

\end{document}